\newtheorem{theorem}{Theorem}
\newtheorem{lemma}{Lemma}
\newtheorem{definition}{Definition}
\newtheorem{property}{Condition}
\DeclareMathOperator{\Hop}{H}
\newcommand{\Hshannon}{{\Hop}}
\DeclareMathOperator{\emin}{min}
\newcommand{\Hmin}{\Hop_{\emin}}
\DeclareMathOperator{\ext}{ext}
\DeclareMathOperator{\extr}{extr}
\DeclareMathOperator{\conv}{conv}
\DeclareMathOperator{\entropy}{H}
\newcommand{\cancel}[1]{}
\begin{document}

\title{Bit Commitment from Non-Signaling Correlations}

\author{Severin~Winkler, J\"urg~Wullschleger, and Stefan~Wolf
\thanks{S. Winkler and S. Wolf are with the Computer Science Department, ETH Z\"urich, CH-8092 Z\"urich, Switzerland (e-mail: swinkler@ethz.ch; wolf@inf.ethz.ch).}
\thanks{J. Wullschleger is with the Department of Mathematics, University of Bristol, Bristol BS8 1TW, U.K. (e-mail: j.wullschleger@bristol.ac.uk).}}

\maketitle

\begin{abstract}
Central cryptographic functionalities such as encryption, authentication, or secure two-party computation cannot be realized in an information-theoretically secure way from scratch. This serves as a motivation to study what (possibly weak) primitives they {\em can\/} be based on. We consider as such starting points general two-party input-output systems that do not allow for message transmission, and show that they can be used for realizing unconditionally secure bit commitment as soon as they are non-trivial, i.e., cannot be securely realized from distributed randomness only. 

\end{abstract}

\begin{IEEEkeywords}
Unconditional security, bit commitment, non-locality.
\end{IEEEkeywords}

\IEEEpeerreviewmaketitle

\section{Introduction}

Modern cryptography deals --- besides the classical tasks of encryption and authentication --- with secure cooperation between two (or more) parties willing to collaborate but distrusting each other. Examples of important functionalities of such secure {\em two-party computation\/} are {\em bit commitment\/} and {\em oblivious transfer}. In this note, we concentrate  on bit commitment, a primitive which, for instance, allows for fair coin flipping \cite{Blum82} and has central applications in interactive proof systems. 

A bit commitment scheme is a protocol between two parties, Alice and Bob, that consists of two stages. First, they execute {\em Commit\/} where Alice chooses a bit $b$ as input. Later, they execute {\em Open} where Alice reveals the bit $b$ to Bob. The security properties of bit commitment are the following. Security for Alice ensures that the {\em Commit\/} protocol does not give any information about the bit $b$ to Bob. Security for Bob, on the other hand, means that after the execution of {\em Commit}, $b$ cannot be changed anymore by Alice. Ideally, one would like these security properties to hold even against an adversary with unlimited computing power.

It is well known that unconditionally secure bit commitment cannot be implemented from (noiseless) classical communication only  --- and the same is true even for (noiseless) quantum communication~\cite{Mayers97,LoChau97}. Therefore, it is interesting to study unconditionally secure reductions of bit commitment to weaker primitives, e.g., to physical assumptions. It is known that bit commitment can be realized from communication over noisy channels~\cite{Crepea97}, \cite{WiNaIm03} or from pieces of correlated randomness~\cite{IMNW04},~\cite{WolWul04},~\cite{IMNW06}.

Measurements on entangled quantum states can produce so-called {\em non-local correlations}, i.e., correlations that cannot be simulated with shared classical information. These correlations can be modeled as {\em bipartite input-output systems\/} that are characterized by a conditional distribution $P_{XY|UV}$, where $U$ and $V$ stand for the inputs and $X$ and $Y$ for the outputs of the system, respectively. We only consider correlations that are {\em non-signaling}, i.e., which do not allow for message transmission from one side to the other. When using a non-signaling system, a party receives its output immediately after giving its input, independently of whether the other party has given its input already. This prevents the parties from signaling by delaying their inputs. An example of such a system is the {\em non-local box\/} ({\em NL box\/} for short) proposed by Popescu and Rohrlich~\cite{PopRoh94}, where the inputs and outputs are binary, each output is a uniform bit, independent of the pair of inputs, but $X\oplus Y=U\land V$ always holds. 

As bit commitment cannot be implemented from quantum communication, the question has been studied whether bit commitment can be realized when the two parties share trusted non-local correlations as a resource. It has been proven in \cite{BCUWW06} that unconditionally secure bit commitment can be implemented from NL boxes. This result shows that unconditionally secure computation can be realized from non-signaling systems in principle. In particular, it implies that the problems that arise from the fact that any non-signaling system allows the parties to delay their inputs can be circumvented. However, the correlations of an NL box cannot be realized by measurements on a quantum state \cite{Tsi93}. In the present article we show that  {\em any\/} non-signaling system providing binary outputs 
can either be simulated securely with shared randomness, or allows for information-theoretically secure bit commitment (Theorem \ref{completeness2}); our condition is thus tight. This implies in particular that even local non-signaling correlations can be used to implement unconditionally secure bit commitment if they are provided as a trusted resource to the two parties.


\IEEEPARstart{}{} 

\section{Preliminaries}

\subsection{Bit Commitment}
A bit commitment scheme is a pair of protocols \texttt{Commit} and \texttt{Open} executed by two parties Alice and Bob. First, Alice and Bob execute \texttt{Commit} where Alice has a bit $b$ as input. Bob either accepts or rejects the execution of \texttt{Commit}. Later, they execute \texttt{Open} where Bob has output $(accept,b')$ or $reject$. The two protocols must have the following (ideal) properties:
\begin{itemize}
 \item Correctness: If both parties follow the protocol, then Bob always accepts with $b'=b$.
 \item Hiding: If Alice is honest, then committing to $b$ does not reveal any information about $b$ to Bob.\footnote{Bob's views for $b=0$ and $b=1$ are indistinguishable.}  
 \item Binding: If Bob is honest and accepts after the execution of \texttt{Commit}, then there exists only one value $b'$ (which is equal to $b$, if Alice is honest) that Bob accepts as output after the execution of \texttt{Open}.
\end{itemize}
In the following we call a bit commitment scheme secure, if it fulfills the above ideal requirements except with an error that can be made negligible (as a function of some security parameter $n$).
\subsection{Notation}
Let $W:\mathcal{X}\rightarrow \mathcal{Y}$ be a \textit{stochastic matrix} with rows indexed by elements of $\mathcal{X}$ and columns indexed by elements of $\mathcal{Y}$. We denote the entries of $W$ by $W(y|x)=W_{x}(y)$ and the row vector indexed by $x$ by $W_{x}$. $W_{x}(\cdot)$ defines a probability distribution on $\mathcal{Y}$ for every $x \in \mathcal{X}$, i.e., for all $x$ it holds that $W(y|x)\geq 0$ for all  $y$ and $\sum_{y}W(y|x)=1$.
We denote by $\conv(W)$ the convex hull of the set $\{W_x|x \in \mathcal{X}\}$, i.e., the convex hull of the row vectors of $W$. We call $W_{x_0}$ an extreme point of this set if the convex hull of the set $(\{W_x|x \in  \mathcal{X}\}\backslash\{W_{x_0}\})$ is strictly smaller. We denote the set of extreme points by $\extr(\conv( W))$. We call $W_{z_{0}}$ non-extreme if it is not an extreme point of $\conv(W)$.
We denote by $x^n=(x_{1},\ldots,x_{n})$ a sequence of elements in $\mathcal{X}$ or a vector in $\mathcal{X}^n$. If $I:=\{i_{1},\ldots,i_{k}\}\subseteq \{1,2,\ldots,n\}$ then $x^I$ denotes the sub-sequence $(x_{i_{1}},x_{i_{2}},\ldots,x_{i_{k}})$ of $x^n$. We denote by $h(\cdot)$ the \textit{binary entropy function}.

We call a function $f(n) \geq 0$ \emph{negligible} if for any $c>0$, there exists $n_c$ such that for all  $n > n_c$, $f(n) < 1/n^c$\;.
We call $f(n)$ \emph{overwhelming} if $1-f(n)$ is negligible.

\subsection{Non-Signaling Boxes}
\label{Boxes}
A \textit{non-signaling box} is defined by a stochastic matrix 
$$W:\mathcal{U} \times \mathcal{V} \rightarrow \mathcal{X} \times \mathcal{Y}$$
as follows:  Alice gives an input $u \in \mathcal{U}$ and Bob gives an input $v \in \mathcal{V}$. Alice gets output $x \in \mathcal{X}$ and Bob $y \in \mathcal{Y}$ with probability $W(xy|uv)$.
Furthermore, the following non-signaling conditions must hold
$$\sum_{y}W(xy|uv)=\sum_{y}W(xy|uv')~~\forall u,v,v',x, $$
$$\sum_{x}W(xy|uv)=\sum_{x}W(xy|u'v)~~\forall u,u',v,y,$$
i.e., the distribution of Alice's output is independent of Bob's input (and vice-versa). A party receives its output immediately after giving its input, independently of whether the other party has given its input already. Note that this is possible, since the box is non-signaling. Furthermore, after a box is used once, it is destroyed. The set of non-signaling boxes can be divided into two types: local and non-local. A box is local if and only if it can be simulated by non-communicating parties with only shared randomness as a resource. This means that there exist probabilities $p_{i}$ and stochastic matrices $V^{i}_{A}, V^{i}_{B}$ such that
\begin{align}
W(xy|uv)=\sum_{i=1}^n p_{i}V^{i}_{A}(x|u)V^{i}_{B}(y|v) ~~\forall u,v,x,y.
\end{align} 
A box is called \textit{independent} if there exist stochastic matrices $V_{A}, V_{B}$ such that 
$$W(xy|uv)=V_{A}(x|u)V_{B}(y|v) ~~\forall u,v,x,y,$$
i.e., such a box can be simulated without any shared resources at all. In the following we only consider boxes with binary outputs, i.e., $\mathcal{X}=\mathcal{Y}=\{0,1\}$. We define
$$W^A(x|u):=\sum_{y}W(xy|uv)~~\forall u,v,x,$$
$$W^B(y|v):=\sum_{x}W(xy|uv)~~\forall  u,v,y.$$
We call a box with binary outputs \textit{perfectly correlated} for an input pair $(u,v)\in\mathcal{U}\times\mathcal{V}$ if
$$W(01|uv)=W(10|uv)=0$$
and \textit{perfectly anti-correlated} if
$$W(00|uv)=W(11|uv)=0.$$
An input $u$ for Alice is called \textit{redundant} if there exists $\tilde u\neq u$ such that 
$$W(xy|uv)=W(xy|\tilde uv)~\forall x,y,v.$$

\subsection{Chernoff/Hoeffding Bounds}

We will use the following bounds attributed to Chernoff \cite{Cherno52} and Hoeffding \cite{Hoeffd63}.
\begin{lemma}
\label{chernoff1}
Let $X_{1}, X_{2},\ldots,X_{n}$ be independent random variables with $\Pr[X_{i}=1]=p_{i}$ and $\Pr[X_{i}=0]=1-p_{i}$. Let
 $X = \sum_{i=1}^nX_{i}$ and $\mu = E[X]$. Then for any $0 < \delta < 1$ it holds that
\begin{align*}
  \Pr[X > (1+\delta)\mu] &\leq \exp(-\delta^2\mu/3)\;, \\
  \Pr[X < (1-\delta)\mu] &\leq \exp(-\delta^2\mu/2)\;.
\end{align*}
\end{lemma}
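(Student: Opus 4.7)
The plan is to use the standard exponential moment (Chernoff) method. The first step is to fix a parameter $t > 0$ and apply Markov's inequality to the nonnegative random variable $e^{tX}$, giving $\Pr[X > (1+\delta)\mu] = \Pr[e^{tX} > e^{t(1+\delta)\mu}] \leq E[e^{tX}]/e^{t(1+\delta)\mu}$. Using independence, the moment generating function factors as $E[e^{tX}] = \prod_{i=1}^n(1 + p_i(e^t-1))$, and the elementary inequality $1+z \leq e^z$ (applied termwise) yields $E[e^{tX}] \leq \exp(\mu(e^t-1))$, using only that $\mu = \sum_i p_i$.

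Next I would optimize over $t$: minimizing $\exp(\mu(e^t-1) - t(1+\delta)\mu)$ leads to $t = \ln(1+\delta) > 0$ and hence $\Pr[X > (1+\delta)\mu] \leq \bigl(e^\delta/(1+\delta)^{1+\delta}\bigr)^\mu$. The lower tail follows symmetrically: apply Markov's inequality to $e^{-tX}$ with $t > 0$, optimize at $t = -\ln(1-\delta) > 0$, and obtain $\Pr[X < (1-\delta)\mu] \leq \bigl(e^{-\delta}/(1-\delta)^{1-\delta}\bigr)^\mu$.

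The final step is to verify the two scalar inequalities $e^\delta/(1+\delta)^{1+\delta} \leq e^{-\delta^2/3}$ and $e^{-\delta}/(1-\delta)^{1-\delta} \leq e^{-\delta^2/2}$ on $0 < \delta < 1$. Taking logarithms, this reduces to showing that $f(\delta) := \delta - (1+\delta)\ln(1+\delta) + \delta^2/3$ and $g(\delta) := -\delta - (1-\delta)\ln(1-\delta) + \delta^2/2$ are nonpositive on $(0,1)$; each vanishes at $\delta = 0$, and the required monotonicity is checked by one further differentiation together with the Taylor expansion of $\ln(1\pm\delta)$.

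I expect the only mild obstacle to be this last step, since neither bound is tight and one must verify that the specific constants $1/3$ and $1/2$ work uniformly on the full range $0 < \delta < 1$ — in particular, the $1/3$ in the upper-tail bound is the place where the restriction $\delta < 1$ is actually used (on the whole positive line one would only get a weaker constant via $e^t - 1 - t(1+\delta)$). The exponential moment computation itself and the optimization over $t$ are entirely routine.
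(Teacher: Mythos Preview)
Your argument is the standard exponential-moment derivation of the multiplicative Chernoff bounds and is correct; the only thing to flag is that the calculus step for $f$ requires a slightly closer look (one checks $f(0)=f'(0)=0$ and $f''(\delta)=2/3-1/(1+\delta)\leq 0$ on $(0,1/2]$, while on $[1/2,1)$ one verifies $f'(\delta)\leq 0$ directly), but this is routine and your outline covers it.

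There is nothing to compare against, however: the paper does not prove this lemma. It is stated as a known bound attributed to Chernoff and Hoeffding (with citations to \cite{Cherno52} and \cite{Hoeffd63}) and used as a black box throughout. So your proposal supplies strictly more than the paper does here.
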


\begin{lemma}
\label{hoeffding}
Let $X_{1}, X_{2},\ldots,X_{n}$ be independent random variables with $\Pr[X_{i}=1]=p_{i}$ and $\Pr[X_{i}=0]=1-p_{i}$. Let
 $X = \sum_{i=1}^nX_{i}$ and $\mu = E[X]$. Then for any $0 < \delta < 1$ it holds that
\begin{align*}
  \Pr[X > \mu + \delta] &\leq \exp(-2\delta^2/n)\;, \\
 \Pr[X < \mu - \delta] &\leq \exp(-2\delta^2/n)\;.
\end{align*}
\end{lemma}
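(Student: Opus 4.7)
The plan is to prove both tail bounds via the standard Chernoff--style exponential moment method, the main non-trivial ingredient being Hoeffding's lemma on the moment generating function of a bounded, centered random variable.

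For the upper tail, I would start by observing that for any $t>0$, Markov's inequality applied to $e^{tX}$ gives
$$\Pr[X>\mu+\delta]=\Pr\!\left[e^{tX}>e^{t(\mu+\delta)}\right]\le e^{-t(\mu+\delta)}\,E[e^{tX}].$$
By independence of the $X_i$, we have $E[e^{tX}]=\prod_{i=1}^n E[e^{tX_i}]$, so the problem reduces to bounding the moment generating function of each $X_i$ individually.

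The main technical step is Hoeffding's lemma: if $Y$ is a random variable with $Y\in[a,b]$ and $E[Y]=0$, then $E[e^{tY}]\le\exp(t^2(b-a)^2/8)$ for all $t\in\mathbb{R}$. I would prove it by convexity: since $y\mapsto e^{ty}$ is convex, for $y\in[a,b]$ one has $e^{ty}\le\frac{b-y}{b-a}e^{ta}+\frac{y-a}{b-a}e^{tb}$. Taking expectations (using $E[Y]=0$) yields $E[e^{tY}]\le\frac{b}{b-a}e^{ta}-\frac{a}{b-a}e^{tb}$, and setting $L(t):=\log$ of this expression, a short computation gives $L(0)=L'(0)=0$ and $L''(t)\le(b-a)^2/4$, so Taylor's theorem gives $L(t)\le t^2(b-a)^2/8$. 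Applying this to $Y_i:=X_i-p_i\in[-p_i,\,1-p_i]$, whose range has length exactly $1$, we obtain $E[e^{t(X_i-p_i)}]\le e^{t^2/8}$, and hence $E[e^{tX}]\le e^{t\mu+t^2n/8}$.

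Substituting back into the Markov bound gives $\Pr[X>\mu+\delta]\le\exp(-t\delta+t^2n/8)$. Minimizing the exponent over $t>0$ by taking $t=4\delta/n$ produces exactly $\exp(-2\delta^2/n)$, establishing the first inequality. The lower tail follows by the symmetric argument: apply the same reasoning either to the variables $-X_i$ with mean $-p_i$, or equivalently to $1-X_i$, noting that shifts and sign flips preserve both independence and the range of length $1$ needed for Hoeffding's lemma. The only real obstacle is Hoeffding's lemma itself; the rest is a routine optimization. Since this lemma is a classical result with the short convexity proof sketched above, no serious difficulty is expected.
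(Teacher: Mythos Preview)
Your argument is correct and is exactly the standard proof of Hoeffding's inequality via the exponential moment method and Hoeffding's lemma. Note, however, that the paper does not actually supply a proof of this lemma: it is stated as a classical bound attributed to Chernoff and Hoeffding, with citations, and used as a black box. So there is no ``paper's own proof'' to compare against; what you have written is simply the textbook derivation that underlies the cited result.
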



\subsection{Information Theory}
We will use the smoothed versions of the min-entropy \cite{RenWol05}. For an event $\mathcal{E}$, let
$P_{X \mathcal{E}|Y=y}(x)$ be the probability that $X=x$
\emph{and} the event $\mathcal{E}$ occurs, conditioned on $Y=y$. We define
\begin{align*}
  \entropy^\epsilon_{\infty}(X|Y) &:= \max_{\mathcal{E}: \Pr(\mathcal E)\geq 1-\epsilon}
  \min_{y} \min_x (-\log P_{X \mathcal{ E}|Y=y}(x)) .
\end{align*}

We will make use of the following lemma from \cite{RenWol05}.
\begin{lemma}
\label{chainRule}
Let $P_{XYZ}$ be a probability distribution. For any $\epsilon,\epsilon'>0$,
$$\entropy_{\infty}^{\epsilon+\epsilon'}(X|YZ)\geq \entropy_{\infty}^{\epsilon}(XY|Z)-\log(|\mathcal Y|)-\log(1/\epsilon')\;.$$
\end{lemma}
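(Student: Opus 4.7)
The plan is to apply the standard smooth-min-entropy chain-rule template: start with the event $\mathcal{E}$ witnessing the right-hand side, and surgically discard the conditionings $(y,z)$ on which the $y$-marginal under $\mathcal{E}$ is so small that it would blow up $P_{X\mathcal{E}|YZ=yz}(x)$ after Bayes. Using a cut-off threshold of $\epsilon'/|\mathcal{Y}|$ keeps the discarded probability mass below $\epsilon'$; that same threshold, when inverted inside the Bayes step, is exactly responsible for the two deficit terms $\log|\mathcal{Y}|$ and $\log(1/\epsilon')$ on the right-hand side. The whole argument is essentially a Markov-style truncation.

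Concretely, set $k := \entropy_{\infty}^{\epsilon}(XY|Z)$ and fix an event $\mathcal{E}$ with $\Pr[\mathcal{E}]\geq 1-\epsilon$ and $P_{XY\mathcal{E}|Z=z}(x,y)\leq 2^{-k}$ for all $x,y,z$. Define the bad conditionings $\mathcal{B} := \{(y,z) : P_{Y\mathcal{E}|Z=z}(y) < \epsilon'/|\mathcal{Y}|\}$ and let $\mathcal{E}' := \mathcal{E} \cap \{(Y,Z)\notin \mathcal{B}\}$. The mass removed from $\mathcal{E}$ is
$$\Pr[\mathcal{E}\setminus\mathcal{E}'] = \sum_{(y,z)\in\mathcal{B}} P_Z(z)\,P_{Y\mathcal{E}|Z=z}(y) \leq \sum_z P_Z(z)\cdot |\mathcal{Y}|\cdot \frac{\epsilon'}{|\mathcal{Y}|} = \epsilon',$$
so $\Pr[\mathcal{E}']\geq 1-\epsilon-\epsilon'$, which is the probability budget the left-hand side affords.

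For any $(y,z)\notin\mathcal{B}$ in the support of $YZ$, I then combine $P_{YZ}(y,z)\geq P_{Y\mathcal{E}Z}(y,z) \geq P_Z(z)\cdot \epsilon'/|\mathcal{Y}|$ with $P_{X\mathcal{E}'YZ}(x,y,z) = P_{X\mathcal{E}YZ}(x,y,z) \leq P_Z(z)\cdot 2^{-k}$ to obtain $P_{X\mathcal{E}'|YZ=yz}(x) \leq (|\mathcal{Y}|/\epsilon')\cdot 2^{-k}$, whereas on $(y,z)\in\mathcal{B}$ the conditional is zero by construction and contributes $+\infty$ to the minimum. Inserting $\mathcal{E}'$ into the definition of $\entropy_{\infty}^{\epsilon+\epsilon'}(X|YZ)$ then produces exactly the bound $k-\log|\mathcal{Y}|-\log(1/\epsilon')$. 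I do not foresee a real obstacle here: the only items worth double-checking are that $\mathcal{E}'$ is a well-defined event (it is, since membership is determined by $Y$, $Z$, and $\mathcal{E}$) and that the threshold $\epsilon'/|\mathcal{Y}|$ is the tight choice balancing probability loss $\epsilon'$ against entropy loss $\log(|\mathcal{Y}|/\epsilon')$.
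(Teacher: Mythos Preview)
Your argument is correct and is the standard truncation proof of the smooth min-entropy chain rule. Note, however, that the paper does not supply its own proof of this lemma at all: it merely quotes the statement and attributes it to \cite{RenWol05}. So there is nothing to compare against in the paper itself; what you wrote is essentially the proof one finds in the Renner--Wolf reference, and it goes through as stated.
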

The following lemma from \cite{HolRen06} gives a lower bound for the smooth entropy of $n$-fold product distributions:
\begin{lemma}
\label{smoothBound}
 Let $P_{X^nY^n}:=P_{X_{1}Y_{1}}\ldots P_{X_nY_n}$ be a probability distribution over $\mathcal{X}^n\times\mathcal{Y}^n$ and let $\epsilon > 0$. Then
$$\entropy_{\infty}^{\epsilon}(X^n|Y^n)\geq \entropy(X^n|Y^n)-4\sqrt{n\log(1/\epsilon)}\log(|\mathcal{X}|)\;.$$
\end{lemma}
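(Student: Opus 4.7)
The plan is to invoke the asymptotic equipartition property in its nonasymptotic form: for independent sequences the surprisal $-\log P_{X^n|Y^n}(X^n|Y^n)$ concentrates tightly around the conditional Shannon entropy, so the set of typical pairs has overwhelming probability and, by construction, only ``flat'' conditional mass. The smoothing event in the definition of $\entropy_\infty^\epsilon$ is then taken to be exactly this typical set.

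First I would introduce the per-coordinate surprisals $Z_i := -\log P_{X_i|Y_i}(X_i|Y_i)$. Because the pairs $(X_i,Y_i)$ are independent, so are the $Z_i$; each is nonnegative, with $E[Z_i] = \entropy(X_i|Y_i)$, and $\sum_{i=1}^n Z_i = -\log P_{X^n|Y^n}(X^n|Y^n)$ has expectation $\entropy(X^n|Y^n)$. Set $t := 4\sqrt{n\log(1/\epsilon)}\log|\mathcal{X}|$ and define the typical event
$$\mathcal{E} := \bigl\{(x^n,y^n) : -\log P_{X^n|Y^n=y^n}(x^n) \geq \entropy(X^n|Y^n) - t\bigr\}.$$
For any fixed $y^n$ and any $x^n$ in the slice $\mathcal{E}|_{y^n}$, the joint probability $P_{X^n\mathcal{E}|Y^n=y^n}(x^n) = P_{X^n|Y^n=y^n}(x^n)$ is at most $2^{-(\entropy(X^n|Y^n)-t)}$ by construction, while for $x^n$ outside the slice it vanishes. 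Hence once $\Pr[\mathcal{E}] \geq 1-\epsilon$ is established, the lemma follows directly from the definition of $\entropy_\infty^\epsilon$.

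To bound $\Pr[\neg\mathcal{E}]$ I would like to apply Hoeffding's inequality (Lemma \ref{hoeffding}) to $\sum Z_i$, but the $Z_i$ are not bounded above: whenever $P_{X_i|Y_i}(X_i|Y_i)$ is tiny, $Z_i$ blows up. The natural fix is truncation: set $\tilde Z_i := \min(Z_i,\log|\mathcal{X}|+c)$ for some threshold $c$ to be chosen of order $\log n$. Since $\tilde Z_i \leq Z_i$, the lower-tail event $\sum Z_i < \mu - t$ is contained in $\sum \tilde Z_i < \mu - t$. A sum such as $-\sum_{p \leq 2^{-(\log|\mathcal{X}|+c)}} p\log p$ bounds the contribution to $E[Z_i]-E[\tilde Z_i]$ coming from atoms of small conditional probability, and choosing $c$ large enough that this gap (summed over $i$) is at most $t/2$ reduces the problem to a standard Hoeffding estimate on $[0,\log|\mathcal{X}|+c]$-valued variables. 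This yields $\Pr[\neg\mathcal{E}] \leq \exp\bigl(-2(t/2)^2/(n(\log|\mathcal{X}|+c)^2)\bigr) \leq \epsilon$ for the stated value of $t$, with the slack factor $4$ (versus $\sqrt{1/2}$) absorbing both the truncation error and the mild logarithmic blow-up of the effective bound.

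The principal obstacle is precisely this unboundedness of the surprisal and the resulting bookkeeping for the truncation term; getting clean constants is fiddly, which is exactly what forces the slightly loose factor of $4$ in the statement. A cleaner alternative would be a direct Chernoff/moment-generating-function argument on the surprisal variable (controlling its ``varentropy''), which avoids the truncation step but demands a little more care with the MGF on finite alphabets; the truncation route above is the most elementary path that still delivers the advertised $\sqrt{n\log(1/\epsilon)}$ dependence.
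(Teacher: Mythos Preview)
The paper does not supply its own proof of this lemma; it is quoted from \cite{HolRen06}. Your overall strategy---take the event $\mathcal{E}=\{-\log P_{X^n|Y^n}(X^n|Y^n)\ge \entropy(X^n|Y^n)-t\}$, note that on $\mathcal{E}$ every conditional atom is at most $2^{-(\entropy(X^n|Y^n)-t)}$, and then show $\Pr[\mathcal{E}]\ge 1-\epsilon$ by concentration of the independent surprisals $Z_i$---is the standard and correct route.

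The specific truncation step, however, does not close with the stated constant. With threshold $L=\log|\mathcal{X}|+c$ the total mean shift $\sum_i(E[Z_i]-E[\tilde Z_i])$ is of order $n\,2^{-c}$, so forcing it below $t/2=O\bigl(\sqrt{n\log(1/\epsilon)}\,\log|\mathcal{X}|\bigr)$ requires $c\gtrsim\tfrac12\log n$. But then Hoeffding for $[0,L]$-valued summands gives
\[
\Pr\!\Bigl[\sum_i\tilde Z_i<\tilde\mu-\tfrac{t}{2}\Bigr]\;\le\;\exp\!\Bigl(-\tfrac{2(t/2)^2}{nL^2}\Bigr)\;=\;\exp\!\Bigl(-\tfrac{8\log(1/\epsilon)\,\log^2|\mathcal{X}|}{(\log|\mathcal{X}|+c)^2}\Bigr),
\]
and with $c$ growing like $\log n$ the exponent tends to $0$, so the bound degrades to $1$ rather than $\epsilon$. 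The factor $4$ in $t$ is a fixed constant and cannot absorb a $(\log n)^{2}$ loss in the exponent. The MGF route you mention as an alternative is the right repair: bounding $E[e^{-\lambda Z_i}]=E_{Y_i}\bigl[\sum_x P(x|Y_i)^{1+\lambda}\bigr]$ directly via the finite alphabet (or, equivalently, using a one-sided second-moment inequality for nonnegative summands together with $E[Z_i^2]=O(\log^2|\mathcal{X}|)$) avoids truncation entirely and delivers the advertised bound.
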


\cancel{
Finally, we will use the following two lemmas about the effect of side information on the min-entropy. 
\begin{lemma}{\em  \textbf{\cite{Wolf99}}}
\label{minEnt1}
 Let P,Q and R be random variables with I(P,R)=0. Then it holds that
$$\mathrm{P}_{QR}[\entropy_{\infty}(P|Q=q,R=r)\geq \entropy_{\infty}(P)-\log|\mathcal{Q}|-s]\geq 1-2^{-s}$$
\end{lemma}
}

\subsection{Randomness Extraction} 

A function $f: \mathcal{X} \times \mathcal{S}\rightarrow\mathcal{Y}$ is called a \textit{two-universal hash function} \cite{CarWeg79} if for all $x_{0}\neq x_{1}$ we have 
$$\Pr[f(x_{0},S)=f(x_{1},S)]\leq \frac{1}{|\mathcal{Y}|}$$
if S is uniform over $\mathcal{S}$.
The following lemma from \cite{BeBrRo88,ILL89} shows that two-universal hash functions are strong extractors, i.e., the concatenation of the seed and the output of the extractor is close to uniform.
\cancel{
\begin{definition}{\em \textbf{\cite{CarWeg79}}}
 A function $f: \mathcal{X} \times \mathcal{S}\rightarrow\mathcal{Y}$ is called a \textit{two-universal hash function} if for all $x_{0}\neq x_{1}$ we have 
$$\Pr[f(x_{0},S)=f(x_{1},S)]\leq \frac{1}{|\mathcal{Y}|}$$
if S is uniform over $\mathcal{S}$.
\end{definition}
}

\begin{lemma}[Leftover hash lemma] 
\label{Leftover}
Let $f: \mathcal{X} \times \mathcal{S}\rightarrow\mathcal{Y}$ be a two-universal hash function with $m>0$. Let $X$ be a random variable over $\mathcal{X}$ and let $\epsilon > 0$. If 
$$\entropy_{\infty}(X)-2\log(1/ \epsilon)\geq m,$$
then $\frac12 ||(f(S,X),S) - (U,S)||_1 \leq \epsilon$ for $S$ and $U$ independent and uniform over $\mathcal{S}$ and $\mathcal{Y}$. 
\end{lemma}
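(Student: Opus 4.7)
The plan is to run the standard collision-probability argument: bound the $L_2$ distance of the output distribution from uniform via two-universality, and then convert to $L_1$ via Cauchy--Schwarz.

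First, I would introduce the collision probability $\mathrm{Col}(Z) := \sum_z P_Z(z)^2$. Note that if $(S',X')$ is an independent copy of $(S,X)$, then the collision probability of $(f(S,X),S)$ equals $\Pr[S=S']\cdot\Pr[f(S,X)=f(S,X')\mid S=S']$. Since $S$ is uniform, $\Pr[S=S']=1/|\mathcal{S}|$. Conditioned on $S=S'=s$, I split the event $f(s,X)=f(s,X')$ into the cases $X=X'$ and $X\neq X'$: the first has probability $\mathrm{Col}(X)$, while by two-universality the second is at most $1/|\mathcal{Y}|=2^{-m}$ averaged over $s$. Therefore
\[
\mathrm{Col}(f(S,X),S)\;\leq\;\frac{\mathrm{Col}(X)+2^{-m}}{|\mathcal{S}|}.
\]
Using $\mathrm{Col}(X)\leq\max_x P_X(x)=2^{-\Hop_\infty(X)}$, the right-hand side becomes $(2^{-\Hop_\infty(X)}+2^{-m})/|\mathcal{S}|$.

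Next, I would convert this to statistical distance. Writing $P$ for the distribution of $(f(S,X),S)$ and $U$ for the uniform distribution on $\mathcal{Y}\times\mathcal{S}$ (which has support size $2^m|\mathcal{S}|$), a direct expansion gives $\|P-U\|_2^2=\mathrm{Col}(P)-1/(2^m|\mathcal{S}|)$. By Cauchy--Schwarz,
\[
\tfrac{1}{2}\|P-U\|_1\;\leq\;\tfrac{1}{2}\sqrt{2^m|\mathcal{S}|}\cdot\|P-U\|_2\;\leq\;\tfrac{1}{2}\sqrt{2^m|\mathcal{S}|\cdot\mathrm{Col}(P)-1}.
\]
Plugging in the collision bound from the previous step, the $-1$ cancels the $+1$ coming from the $2^{-m}$ term and leaves $2^{m-\Hop_\infty(X)}$ under the square root, so $\tfrac{1}{2}\|P-U\|_1\leq\tfrac{1}{2}\cdot 2^{(m-\Hop_\infty(X))/2}$.

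Finally, I would use the hypothesis $\Hop_\infty(X)\geq m+2\log(1/\epsilon)$, which gives $m-\Hop_\infty(X)\leq-2\log(1/\epsilon)$, so $2^{(m-\Hop_\infty(X))/2}\leq\epsilon$. This yields $\tfrac{1}{2}\|P-U\|_1\leq\epsilon/2\leq\epsilon$, as required. There is no real obstacle here; the only bookkeeping step that needs care is correctly splitting the collision probability into the $X=X'$ and $X\neq X'$ contributions and invoking two-universality only on the latter, since two-universality says nothing about collisions at a single input.
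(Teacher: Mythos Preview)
Your argument is the standard collision-probability proof of the leftover hash lemma and is correct as written; the only step worth a second glance is that conditioned on $S=S'$ the common seed is still uniform (since $S,S'$ are i.i.d.\ uniform), so two-universality applies exactly as you use it. Note that the paper does not actually prove this lemma: it is quoted as a known result with references to \cite{BeBrRo88,ILL89}, so there is no in-paper proof to compare against. Your write-up matches the classical argument from those sources.
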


\subsection{Typical Sequences}
In this section we will state and prove some basic results on typical sequences. More details on this topic can be found in the book by Csisz\'ar and K\"orner \cite{CsiKor81}.
\begin{definition}
 Let $P$ be a probability distribution on $\mathcal{X}$ and $\epsilon >0$. Then the set of $\epsilon$-\textit{typical sequences} is defined as:
\begin{align*}
 \mathcal{T}_{P,\epsilon}^n:=\{x^n \in \mathcal{X}^n:\forall x \in \mathcal{X}~|N(x|x^n)-P(x)n|\leq \epsilon n \\\text{ and } P(x)=0 \Rightarrow N(x|x^n)=0\},
\end{align*}
where $N(x|x^n)$ denotes the number of letters $x$ in $x^n$. 
\end{definition}

\begin{definition}
 For a stochastic matrix $W:\mathcal{X}\rightarrow \mathcal{Z}$ we define the set of $W$-\textit{typical sequences} under the condition $x^n \in \mathcal{X}^n$ with constant $\epsilon$ as
\begin{align*}
 \mathcal{T}_{W,\epsilon}^n(x^n)=\{z^n:&\forall x,z|N(xz|x^n z^n)-W_{x}(z)N(x|x^n)|\leq \epsilon n 
\\&\text{ and } W_{x}(z)=0 \Rightarrow N(xz|x^n z^n)=0\}.
\end{align*}

\end{definition}
The following two well-known lemmas follow directly from Lemma~\ref{chernoff1}.
\begin{lemma}
\label{typical1}
 $P^n(\mathcal{T}^n_{P,\epsilon})\geq 1-2|\mathcal{X}|\exp(-n\epsilon^2/3)$
\end{lemma}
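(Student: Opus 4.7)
The plan is to apply the Chernoff bound of Lemma~\ref{chernoff1} separately to each letter count $N(x|x^n)$ and then take a union bound over $\mathcal{X}$.

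First I would fix $x \in \mathcal{X}$ and write $N(x|x^n) = \sum_{i=1}^n \mathbbm{1}[X_i = x]$ where $X_1,\dots,X_n$ are i.i.d. with distribution $P$. These indicators are independent Bernoulli variables with parameter $P(x)$, so the sum has expectation $\mu = P(x)\,n$. If $P(x)=0$, then $N(x|x^n)=0$ deterministically, so both the counting condition and the support condition in the definition of $\mathcal{T}^n_{P,\epsilon}$ are satisfied with probability one, and such $x$ contribute nothing to the union bound.

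For $P(x) > 0$, I would apply Lemma~\ref{chernoff1} with the multiplicative deviation $\delta := \epsilon/P(x)$, so that $(1\pm\delta)\mu = P(x)n \pm \epsilon n$. When $\delta < 1$ this gives
\begin{align*}
\Pr\bigl[N(x|x^n) > P(x)n + \epsilon n\bigr] &\leq \exp\!\bigl(-\epsilon^2 n/(3P(x))\bigr), \\
\Pr\bigl[N(x|x^n) < P(x)n - \epsilon n\bigr] &\leq \exp\!\bigl(-\epsilon^2 n/(2P(x))\bigr),
\end{align*}
and since $P(x)\leq 1$ both are bounded by $\exp(-\epsilon^2 n/3)$. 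The only delicate point is the case $P(x) \leq \epsilon$, where $\delta \geq 1$ falls outside the stated range of Lemma~\ref{chernoff1}; here the lower tail event $\{N(x|x^n) < P(x)n-\epsilon n\}$ is vacuous since $N(x|x^n)\geq 0$, and for the upper tail I would invoke the standard Chernoff variant $\Pr[X>(1+\delta)\mu]\leq \exp(-\delta\mu/3)$, which yields $\exp(-\epsilon n/3)\leq \exp(-\epsilon^2 n/3)$ as long as $\epsilon\leq 1$. Thus in all cases $\Pr[|N(x|x^n)-P(x)n|>\epsilon n]\leq 2\exp(-\epsilon^2 n/3)$.

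Finally, a union bound over the at most $|\mathcal{X}|$ letters yields
\[
P^n\bigl(\mathcal{X}^n\setminus\mathcal{T}^n_{P,\epsilon}\bigr) \leq 2|\mathcal{X}|\exp(-n\epsilon^2/3),
\]
which is the claim. The main (very minor) obstacle is the boundary case above; everything else is a direct application of the stated bound. I would not expect this case to require more than a sentence of justification.
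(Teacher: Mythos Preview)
Your argument is correct and matches the paper's approach: the paper simply states that the lemma ``follows directly from Lemma~\ref{chernoff1},'' and the (suppressed) proof sketch uses exactly the indicator-plus-union-bound strategy you describe, with $\delta_x = \epsilon/P(x)$. Your treatment is in fact more careful than the paper's, since you explicitly handle the cases $P(x)=0$ and $P(x)\leq\epsilon$ (where $\delta\geq 1$ falls outside the hypothesis of Lemma~\ref{chernoff1}); the paper glosses over both.
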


\begin{lemma}
\label{typical2}
$W_{x^n}^n(\mathcal{T}_{W,\epsilon}^n(x^n))\geq 1-2|\mathcal{X}||\mathcal{Z}|\exp(-n\epsilon^2/3)$
\end{lemma}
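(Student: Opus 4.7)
The plan is to fix $x^n$ and, for each pair $(x,z) \in \mathcal{X}\times \mathcal{Z}$, analyze the concentration of the random variable $N(xz\mid x^n z^n)$ around its expectation under the sampling $z^n \sim W_{x^n}^n$, and then take a union bound over pairs. This mirrors the structure of Lemma \ref{typical1}, but conditioned on the positions of each symbol of $x^n$.

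First, I would observe that $N(xz\mid x^n z^n) = \sum_{i \in I_x} \mathbb{1}[z_i = z]$, where $I_x := \{i : x_i = x\}$ has size $n_x := N(x\mid x^n)$. Because the $z_i$ are drawn independently and for $i \in I_x$ one has $z_i \sim W_x(\cdot)$, this is a sum of $n_x$ independent Bernoulli$(W_x(z))$ indicators with expectation $\mu_{x,z} := n_x W_x(z) \le n$. Crucially, if $W_x(z)=0$, then each indicator is almost surely $0$, so $N(xz\mid x^n z^n)=0$ deterministically and the second clause in the definition of $\mathcal{T}_{W,\epsilon}^n(x^n)$ holds with probability~$1$. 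Thus only the deviation condition needs to be controlled, and only for pairs with $W_x(z)>0$.

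Next, I would apply the multiplicative Chernoff bound (Lemma \ref{chernoff1}) to each such pair $(x,z)$. Setting $\delta := \epsilon n/\mu_{x,z}$, in the regime $\delta<1$ the bound gives $\Pr[\,|N(xz\mid x^n z^n) - \mu_{x,z}| > \epsilon n\,] \le 2\exp(-\delta^2\mu_{x,z}/3) = 2\exp(-\epsilon^2 n^2/(3\mu_{x,z})) \le 2\exp(-n\epsilon^2/3)$, where the last inequality uses $\mu_{x,z}\le n$. A union bound over the at most $|\mathcal{X}||\mathcal{Z}|$ pairs $(x,z)$ then yields the stated estimate on $W_{x^n}^n(\mathcal{T}_{W,\epsilon}^n(x^n))$.

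The only subtlety — and what I expect to be the main obstacle — is the regime $\delta \ge 1$ (equivalently $\mu_{x,z} \le \epsilon n$), in which the form of Chernoff stated in Lemma \ref{chernoff1} does not directly apply. I would sidestep this by using Hoeffding's inequality (Lemma \ref{hoeffding}) on the same $n_x$ indicators with additive deviation $\epsilon n$: this gives $\Pr[\,|N(xz\mid x^n z^n)-\mu_{x,z}|>\epsilon n\,] \le 2\exp(-2\epsilon^2 n^2/n_x) \le 2\exp(-2\epsilon^2 n)$, which is at least as strong as $2\exp(-n\epsilon^2/3)$ for $\epsilon<1$. Hence across both regimes the same pairwise estimate applies, and the union bound completes the proof.
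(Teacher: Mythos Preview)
Your proposal is correct and follows essentially the same approach as the paper, which simply states that the lemma follows directly from Lemma~\ref{chernoff1}: partition by the index sets $I_x$, recognize $N(xz\mid x^n z^n)$ as a sum of i.i.d.\ Bernoulli indicators with mean $n_x W_x(z)$, apply Chernoff, and take a union bound over the $|\mathcal X||\mathcal Z|$ pairs. Your treatment is in fact more careful than the paper's, since you explicitly handle the regime $\delta\ge 1$ (where Lemma~\ref{chernoff1} as stated does not apply) by falling back on Hoeffding; the paper glosses over this edge case.
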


Using the results above we will prove a lemma that we will use in the security proofs in this paper.
The lemma is similar to Lemma 14 in \cite{WiNaIm03}. Let $W:\mathcal{X}\rightarrow \mathcal{Z}$ be a (discrete memoryless) channel, let $a  \in \mathcal{X}$ be an input such that the output distribution of $a$ is not a convex combination of the other output distributions and let $x^n, \tilde x^n \in \mathcal{X}^n$ be sequences such that  $|\{k:x_k \neq a \text{ and } \tilde x_k=a\}|\geq \kappa n$. Then the lemma states that the output of the channel, given $x^n$ as input, will not be $W$-typical conditioned on $\tilde x^n$ with overwhelming probability if $\exp(-\kappa^2 n)$ is negligible.

\begin{lemma}
\label{statLemma}
Let $W:\mathcal{X}\rightarrow \mathcal{Z}$ be a stochastic matrix and $a \in \mathcal{X}$ such that for all probability distributions $P$ over $\mathcal{X}$ such that $P(a)=0$ and

$$\Big \Vert W_{a}-\sum_x P(x) W_{x} \Big \Vert_{1} \geq \delta\;.$$

Let $x^n, \tilde x^n \in \mathcal{X}^n$ with $d_{H}(x^{I_{a}},\tilde x^{I_{a}})\geq \kappa n$ where $I_{a}:=\{k:\tilde x_{k} = a\}$. If $n_{a}:=|I_{a}|\geq \lambda n$, then

$$W_{x^n}^n(\mathcal{T}^n_{W,\epsilon}(\tilde x^n))\leq 2\exp(-n\epsilon^2/3)$$

\noindent where $\epsilon:=\frac{1}{2|\mathcal{Z}|}\lambda \delta \kappa$.
\end{lemma}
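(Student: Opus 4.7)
The plan is to contrast, when the channel $W$ is fed the sequence $x^n$, the expected empirical distribution of the output on the indices $k\in I_a$ with the distribution $W_a$ that $W$-typicality with respect to $\tilde{x}^n$ would demand. The hypothesis $d_H(x^{I_a},\tilde x^{I_a})\geq \kappa n$ guarantees that on a set $J\subseteq I_a$ of size at least $\kappa n$, Alice's sequence $x^n$ disagrees with $a$, so at those positions the channel is driven by rows $W_{x_k}$ with $x_k\neq a$. This pulls the expected empirical distribution $Q:=\frac{1}{n_a}\sum_{k\in I_a} W_{x_k}$ away from $W_a$, and the convex-hull condition on $a$ will quantify by how much.

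Formally, writing $P$ for the empirical distribution of $x^J$ (so $P(a)=0$) and $R:=\sum_x P(x)W_x$, one decomposes $Q=(1-|J|/n_a)W_a+(|J|/n_a)R$, and the hypothesis applied to $P$ gives $\|R-W_a\|_1\geq \delta$. Hence
\[
\|Q-W_a\|_1 \;=\; (|J|/n_a)\,\|R-W_a\|_1 \;\geq\; \kappa\delta,
\]
using $|J|\geq \kappa n$ and $n_a\leq n$. Averaging then produces a letter $z^\ast\in\mathcal{Z}$ with $|Q(z^\ast)-W_a(z^\ast)|\geq \kappa\delta/|\mathcal{Z}|$. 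Now $W$-typicality conditioned on $\tilde{x}^n$ would force, at the pair $(a,z^\ast)$, the inequality $|N(a z^\ast|\tilde x^n z^n)-W_a(z^\ast)n_a|\leq \epsilon n$, whereas the random quantity $N(a z^\ast|\tilde x^n z^n)=\sum_{k\in I_a}\mathbf{1}[Z_k=z^\ast]$ is a sum of $n_a$ independent Bernoulli variables with mean exactly $n_a Q(z^\ast)$. Combining these via the triangle inequality, using $n_a\geq \lambda n$ and the prescribed $\epsilon=\lambda\delta\kappa/(2|\mathcal{Z}|)$, typicality forces the sum to deviate from its true mean by at least $n_a\kappa\delta/|\mathcal{Z}| - \epsilon n \geq 2\epsilon n-\epsilon n=\epsilon n$.

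The final step is just a concentration bound on this sum. Applying Hoeffding's inequality (Lemma~\ref{hoeffding}) to the $n_a$ independent Bernoulli summands yields a probability at most $2\exp(-2\epsilon^2 n^2/n_a)\leq 2\exp(-n\epsilon^2/3)$ since $n_a\leq n$, which is precisely the claimed bound. The subtle point I expect to be the main obstacle is keeping track of the double role of $\epsilon$: it is both the typicality slack in $N(az|\cdot)$ and the calibrated deviation that makes the concentration bound meaningful. The choice $\epsilon=\lambda\delta\kappa/(2|\mathcal{Z}|)$ is designed so that the typicality slack is exactly half the minimum forced deviation $n_a\,|Q(z^\ast)-W_a(z^\ast)|$, leaving the other half for the Chernoff/Hoeffding tail.
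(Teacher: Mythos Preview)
Your proof is correct and follows essentially the same route as the paper: identify the mismatch set $J=D=\{k\in I_a:x_k\neq a\}$, use the extremality hypothesis to lower-bound $\|Q-W_a\|_1$, pass to a single letter $z^\ast$, and combine the typicality constraint with the triangle inequality to force a deviation of order $\epsilon n$ before applying a concentration bound. The only (cosmetic) difference is that you finish with Hoeffding's inequality (Lemma~\ref{hoeffding}) on the $n_a$ Bernoulli summands, whereas the paper invokes the multiplicative Chernoff bound (Lemma~\ref{chernoff1}); your choice is in fact slightly cleaner since it avoids having to check that the relative deviation parameter lies in $(0,1)$.
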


\begin{proof}
Let $D:=\{k \in I_{a}:x_{k} \neq \tilde x_{k}\}$. Then it follows that
\begin{align}
\label{statLemmaInequ}
\nonumber&\Big \Vert\frac{1}{n_{a}}\sum_{k \in I_a}W_{x_{k}}-W_a \Big \Vert_{1}=\frac{|D|}{n_a} \Big \Vert W_{a}-\frac{1}{|D|}\sum_{k \in D}W_{x_{k}} \Big \Vert_{1}\\
&\geq\frac{|D|}{n_a}\delta \geq \kappa\delta\;.
\end{align}
This implies that there exists $b \in \mathcal{Z}$ such that 
$$\Big |\frac{1}{n_{a}}\sum_{k \in I_{a}}W_{x_{k}}(b)-W_a(b) \Big |\geq \frac{1}{|\mathcal{Z}|}\kappa\delta \;.$$
Let $w^n \in \mathcal{T}_{W,\epsilon}^n(\tilde x^n)$. Then it holds that
\begin{align*}
&\Big |N(b | w^{I_{a}})-\sum_{k \in I_{a}}W_{x_{k}}(b) \Big |= \Big |N(ab|\tilde x^n w^n)-\sum _{k \in I_{a}}W_{x_{k}}(b) \Big |
\\&\geq \Big |\sum _{k \in I_{a}}W_{z_{k}}(b)-n_{a}W_{a}(b) \Big |- \Big |n_{a}W_{a}(b)-N(ab|\tilde x^n w^n) \Big |
\\&\geq \frac{1}{|\mathcal{Z}|}\kappa\delta n_{a}-\epsilon n
\\&\geq \frac{1}{2|\mathcal{Z}|}\kappa\delta\lambda n_{a}\;.
\end{align*}
We define independent binary random variables $X_{k},~k \in I_{a}$, with distributions $P_{X_{k}}(1):=W_{x_{k}}(b)$. Let $X=\sum_{k \in I_{a}}X_{i}$ and $\mu:=E[X]=\sum_{k \in I_{a}}W_{x_{k}}(b)$. Let $t:=\frac{1}{2|\mathcal{Z}|}\kappa\delta\lambda n_{a}\mu^{-1}$ (assuming $\mu \neq 0$). Using the Chernoff bound it follows that 
\begin{align*}
 W_{x^n}^n(\mathcal{T}_{W,\epsilon}^n(\tilde x^n))&\leq\Pr \Big [|X-\mu|\geq \frac{1}{2|\mathcal{Z}|}\kappa\delta\lambda n_{a} \Big ]\\
 &=\Pr[|X-\mu|\geq t\mu]\\
 &\leq 2\exp(-\epsilon^2 n/3)\;.
\end{align*}
\end{proof}

\section{Impossibility}
The following theorem proves that a certain class of non-signaling boxes can be securely implemented from shared randomness alone and does, therefore, not allow for unconditinally secure bit commitment (otherwise bit commitment could be implemented form noiseless communication only, which is well known to be impossible). 
\begin{theorem}
\label{impossibility}
Let a local non-signaling box with binary output be defined by $W:\mathcal{U}\times\mathcal{V} \rightarrow \{0,1\}^2$ such that 
$$W(xy|uv)=pV_{A}^0(x|u)V_{B}^0(y|v) + (1-p)V_{A}^1(x|u)V_{B}^1(y|v)$$ and there exists $u_{0}\in\mathcal{U},v_{0}\in\mathcal{V}$ and $b_0,b_1\in\{0,1\}$ with:
\begin{align*}
V_A^0(0|u_{0})&=V_A^1(1|u_{0})=b_{0}\\
V_A^0(1|u_{0})&=V_A^1(0|u_{0})=1-b_{0}\\
V_B^0(0|v_0)&=V_B^1(1|v_0)=b_1\\
V_B^1(1|v_0)&=V_B^0(0|v_0)=1-b_1\;,
\end{align*}
then there is no reduction of information-theoretically secure bit commitment to the box $W$ (with noiseless communication only).
\end{theorem}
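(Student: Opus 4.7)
The plan is to reduce bit commitment from $W$ to bit commitment from shared randomness plus noiseless communication, which is well known to be impossible. The engine is the observation that for this family of boxes the ``hidden'' mixing label $i\in\{0,1\}$ is not actually hidden: by inputting $u_0$ Alice's output deterministically equals $b_0$ when $i=0$ and $1-b_0$ when $i=1$, and symmetrically for Bob on input $v_0$. So either party can freely learn $i$ from a single box use.

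Given a purported BC protocol $\Pi$ using $n$ copies of $W$, I would construct the simulated protocol $\Pi'$ as follows. A trusted coin hands both parties a string $i_1,\ldots,i_n\in\{0,1\}^n$ with each $i_j$ Bernoulli-$(1-p)$ independently. Whenever $\Pi$ would query the $j$-th box with inputs $(u_j,v_j)$, Alice locally samples $X_j\sim V_A^{i_j}(\cdot\,|u_j)$ and Bob locally samples $Y_j\sim V_B^{i_j}(\cdot\,|v_j)$. The rest of the protocol (all communication) proceeds unchanged. Since $W(xy|uv)=\sum_i p_i V_A^i(x|u)V_B^i(y|v)$, the joint view of honest parties in $\Pi'$ is identical to their joint view in $\Pi$, so correctness is preserved.

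The substantive step is showing that $\Pi'$ is at least as secure as $\Pi$, i.e.\ any cheating strategy available in $\Pi'$ is matched by one in $\Pi$. Take a cheating Alice$^{*}$ in $\Pi'$ (who sees all $i_j$ as shared randomness). Define a cheating Alice$^{**}$ in $\Pi$ who, at her first opportunity, inputs $u_0$ to every box; by the hypothesis on $u_0$ she learns $i_1,\ldots,i_n$ exactly. She then discards the real outputs and, whenever $\Pi'$-Alice$^{*}$ would have generated output $X_j\sim V_A^{i_j}(\cdot\,|u_j)$ for some chosen input $u_j$, Alice$^{**}$ samples the same fake $X_j$ and continues the protocol using this value. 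By non-signaling, Bob's actual output from the $j$-th box is distributed as $V_B^{i_j}(\cdot\,|v_j)$ independently of Alice's input to that box; hence the joint of $(X_j^{\text{fake}},Y_j)$ conditioned on $i_j$ is exactly $V_A^{i_j}(\cdot\,|u_j)V_B^{i_j}(\cdot\,|v_j)$, the branch-$i_j$ product distribution. Marginalising over $i_j$, Bob's overall view in $\Pi$ under Alice$^{**}$ coincides with his view in $\Pi'$ under Alice$^{*}$, and Alice$^{**}$ has at least the same information as Alice$^{*}$. So Alice$^{**}$ breaks binding with the same probability. The symmetric argument with $v_0$ handles cheating Bob and the hiding property.

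Finally I would apply the known impossibility: BC from noiseless communication plus shared randomness is impossible, because by an averaging/fixing argument one can pick a typical value $r_0$ for the shared string and obtain a BC scheme from noiseless communication alone with comparable security, contradicting the standard impossibility result. Hence $\Pi$ could not have been secure to begin with. The only genuinely delicate point is the translation of cheating strategies in Step 3 — in particular, verifying that Alice can substitute a locally sampled $X_j^{\text{fake}}$ for the real output without creating any Bob-detectable inconsistency; this is exactly where non-signaling together with the product form $V_A^i\otimes V_B^i$ of each branch does the work.
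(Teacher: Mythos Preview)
Your proposal is correct and takes essentially the same approach as the paper: both show that $W$ can be securely simulated from a shared random bit, with the crucial observation that each party can recover the hidden mixing label by inputting $u_0$ (resp.\ $v_0$), so any adversary in the shared-randomness protocol lifts to an adversary against $W$. The paper states this tersely as ``this implementation is secure, since Alice and Bob can get the same information \ldots\ if they always input $u_0$ and $v_0$,'' while you spell out the strategy translation explicitly; aside from a cosmetic slip (on input $u_0$ with $i=0$ Alice's output is $1-b_0$, not $b_0$, since $V_A^0(0\mid u_0)=b_0$ is the probability of outputting $0$), your argument is sound.
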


\begin{proof}
 We prove the statement by showing that one can securely implement such a box from noiseless communication and shared randomness alone. 
The implementation directly follows the definition of the box: Let $\lambda$ be the shared random bit. Alice on input $u$ outputs $0$ with probability $V^{\lambda}_A(0|u)$ and $1$ with probability $V^{\lambda}_A(1|u)=1-V^{\lambda}_A(0|u)$. Bob on input $v$ outputs $b\in\{0,1\}$ with probability $V^{\lambda}_B(b|v)$. This perfectly implements the behavior of the box. Furthermore, this implementation is secure, since Alice and Bob can get the same information (i.e. the shared randomness $\lambda$) if they only have black-box access to $W$, if they always input $u_{0}$ and $v_{0}$, respectively.
\end{proof}

\cancel{
\begin{proof}
 We prove the statement by showing that one can securely implement such a box from noiseless communication and shared randomness alone. This would allow for bit commitment from noiseless communication which is impossible as mentioned above.  By 'securely' implement we mean that we can give protocols for Alice and Bob that perfectly implement the behavior of the box (i.e. the box is local). Additionally, Alice and Bob can get the same information (i.e. the shared randomness) if they have access to the box. Formally, there exists a simulator (applied to the box) for Alice (Bob) that simulates exactly what an adversary would get in the execution of the protocol (for a more detailed and more formal definition we refer to the chapter 'Secure Two-Party Computation' in \cite{Wullsc07b}). The implementation just follows the definition of the box: Let $\lambda$ be the shared random bit. Alice on input $u$ outputs $0$ with probability $V^{\lambda}_A(0|u)$ and $1$ with probability $V^{\lambda}_A(1|u)=1-V^{\lambda}_A(0|u)$.
Bob on input $v$ outputs $b\in\{0,1\}$ with probability $V^{\lambda}_B(b|v)$. This perfectly implements the behavior of the box. A simulator for Alice always inputs $u_{0}$ (getting the shared randomness). A simulator for Bob always inputs $v_{0}$.
\end{proof}
}

\section{Two Protocols}

We will now give two slightly different protocols, which work for two different kinds of non-signaling boxes.

\subsection{Protocol I}
\label{protocol1}
Informally, the first protocol works as follows: in the \texttt{Commit} protocol an honest Alice gives a fixed input to all her boxes, while Bob chooses his inputs randomly. Alice applies privacy amplification to the outputs of the boxes and uses the resulting key $K$ to hide the bit $B$ she wants to commit to. Alice then sends $K\oplus B$ and the randomness used for privacy amplification to Bob. In the \texttt{Open} protocol Alice sends her outputs from the boxes. Alice's input is chosen such that there is a statistical test  that allows Bob to detect if Alice has changed more than $O(\sqrt{n})$ output values while Bob has only limited information about the output of the boxes before the opening phase. A dishonest Alice might still be able to change $O(\sqrt{n})$ output values. To ensure that this is not possible, we use a linear code and let Alice  send parity check bits of the output to Bob in the \texttt{Commit} protocol. If the minimal distance of the code is large enough, no two strings with the same parity check bits lie in a hamming sphere with radius proportional to $\sqrt{n}$.

Let Alice and Bob share $n$ identical non-signaling boxes given by $W: \mathcal{U}\times\mathcal{V} \rightarrow \{0,1\}^2$. In our protocol, we will require Bob to choose his input uniformly from $\mathcal{V}$. For an honest Bob and a potentially malicious Alice, we can define a stochastic matrix
$\hat W: \{0,1\}\times \mathcal{U} \rightarrow \{0,1\}\times\mathcal{V}$ describing the probability of Bob's input and output values $v$ and $y$, 
conditioned on Alice's input $u$ and output $x$ as 
$$\hat W(yv|xu):= 
\frac{1}{|\mathcal{V}|}\frac{W(xy|uv)}{W^A(x|u)}\;,$$
if $W^A(x|u) \neq 0$, and undefined otherwise. 
Furthermore, we will require an honest Alice to always input a fixed value $u_{a}$ to the box. For an honest Alice, and a potentially malicious Bob that chooses his input  $v \in \{0,1\}$ freely, we can define random variables $X_{v},Y_{v}$ depending on Bob's input that describe the output of Alice and Bob, respectively, i.e. with a joint distribution 
$$P_{X_{v}Y_{v}}(x,y):=W(xy|u_{a}v).$$
The protocol below is secure if there exists a value $a=(x_a,u_a)$ such that the following condition is fulfilled:  
\begin{property}
\label{property1}
(1) There exists $\delta>0$ such that for all probability distributions $P$ over $\{0,1\}^2$ with $P(a)=0$ it holds that 
$$\Big \Vert \hat W_{a}-\sum_{x}P(x) \hat W_{x} \Big \Vert_{1} \geq \delta\;.$$
(2) There exists $\gamma > 0$ such that for all $v\in \mathcal{V}$ it holds that 
$$\Hshannon(X_{v}|Y_{v})\geq \gamma,$$
i.e., the Shannon entropy of Alice's output given Bob's output is non-zero for all possible inputs of Bob.
\end{property}
We label the inputs of Alice as $\{0,\ldots,|\mathcal{U}|-1\}$.
Furthermore, we define the distribution of Alice's output $x$ if her input is $u_a$ as $P(x):=W^A(x|u_a)$ for all $x\in\{0,1\}$. 
Let $\lambda:=\frac{1}{2}\text{min}\{P(x),~x \in \{0,1\}\}$. Let $k$ be the security parameter. Let $\epsilon:=\frac{1}{4} \lambda \delta k/n$. Let $d>2k$ and let $H$ be the parity check matrix of a linear $[n,Rn,d]$-code with $R>(1-\gamma)$. Since we do not have to decode, this could be a random linear code chosen by Bob. Let  $l:=\gamma n-n(1-R)-4\sqrt{nk}-3k$. We choose $k:=n^{2/3}$, which implies that $k, \sqrt{nk} \in O(n^{5/6})$ and $k, k^2/n, n \epsilon^2 \in \Omega(n^{1/3})$. It follows that $l\in (\gamma + R - 1)n-O(n^{5/6})$. If $n$ is big enough, we have $l >0$. Let $\ext:\{0,1\}^*\times \{0,1\}^n\rightarrow \{0,1\}^l$ be a two-universal hash function. We define $syn(x^n):=H^\mathrm{T}x^n$.

\vspace{0.2cm}

\noindent
Commit($b^l$):
\begin{itemize}
 \item Bob chooses $v^n \in_{R} \{0,1\}^n$
 \item Alice and Bob input $u_a^n$ and $v^n$ component-wise to the boxes. Alice gets $x^n \in \{0,1\}^n$ and Bob $y^n \in \{0,1\}^n$.
 \item Alice chooses $r \in_{R} \{0,1\}^*$ and sends \\$(syn(x^n),r,b^l \oplus \ext(r,x^n))$ to Bob.
\end{itemize}
Open():
\begin{itemize}
\item Alice sends Bob $x^n \text{ and } b^l$.
\item Bob checks:
\begin{itemize}
\item $syn(x^n)$ is correct
\item $b \oplus \ext(r,x^n)$ is correct
\item $((y_{1},v_{1}),..,(y_{n},v_{n}))\in\mathcal{T}^n_{\hat W,\epsilon}((x_{1},u_a),..,(x_{n},u_a))$
\item $x^n \in \mathcal{T}^n_{P,\epsilon}$
\end{itemize}
\item If all the checks pass successfully, Bob accepts and outputs $b^l$, otherwise he rejects. 
\end{itemize}

\subsection{Security}
Let $u^n:=(u_{1},\ldots,u_{n})$ be Alice's inputs to the boxes, let $x^n:=(x_{1},\ldots,x_{n})$ be her outputs from the boxes and let $\tilde x^n:=(\tilde x_{1},\ldots,\tilde x_{n})$ be the values Alice sends to Bob in the opening phase. We define $z^n:=((x_{1},u_{1}),\ldots,(x_{n},u_{n}))$ and $\tilde z^n:=((\tilde x_{1},u_a),\ldots,(\tilde x_{n},u_a))$. Let $r^n:=((y_{1},v_{1}),\ldots,(y_{n},v_{n}))$ be Bob's inputs and outputs. 
\begin{lemma}
The protocols \texttt{Commit} and \texttt{Open} satisfy the correctness condition.
\end{lemma}
\begin{proof}
Bob always accepts \texttt{Commit}. If Alice follows the protocol, then $syn(x^n)$ and $b^l \oplus \ext(r_{1},u^n)$ are correct. From Lemma~\ref{typical2} it follows that 
\begin{align*}
\Pr[r^n \in \mathcal{T}_{\hat W,\epsilon}^n(z^n)]&=\hat W_{z^n}(\mathcal{T}_{\hat W,\epsilon}(z^n))\\&\geq 1-16|\mathcal V|\exp(-n\epsilon^2/3)\;,
\end{align*}
and from Lemma~\ref{typical1} it follows that 
\begin{align*}
\Pr[x^n \in \mathcal{T}_{P,\epsilon}^n]
&=P^n_{x^n}(\mathcal{T}_{P,\epsilon})\\
&\geq 1-4\exp(-n\epsilon^2/3)\;.
\end{align*}
Thus, Bob accepts \texttt{Open} with overwhelming probability and outputs $b^l$, the value Alice was committed to.
\end{proof}

\begin{lemma}
 The protocol \texttt{Commit} satisfies the privacy condition with an error negligible in $n$.
\end{lemma}
\begin{proof}
Let us assume that Alice is honest. Alice inputs $u_a$ into the boxes as required by the protocol, while Bob can choose its input $v^n=(v_1,\ldots,v_n)$ freely. We then define the random variables $X^n=X_{v_{1}}\times\ldots\times X_{v_{n}}$ and $Y^n=Y_{v_{1}}\times\ldots\times Y_{v_{n}}$. 
Let $\epsilon_{1}:=2^{-k}$.
According to Lemma~\ref{smoothBound} it holds that
$$\Hmin^{\epsilon_{1}}(X^n|Y^n)\geq \entropy(X^n|Y^n)-4\sqrt{n k}.$$
Using Lemma~\ref{chainRule} with get that
\begin{align*}
\Hmin^{2\epsilon_{1}}(X^n|syn(X^n)V^n)&\geq
\Hmin^{\epsilon_{1}}(X^n|Y^n)-n(1-R)\\&~~-\log(1/\epsilon_{1})\\
&\geq \gamma n-n(1-R)-4\sqrt{n k}-k\\\nonumber&= l+2k\;.
\end{align*}
According to Lemma~\ref{Leftover} Bob has no information about $\ext(r_{1},x^n)$ except with probability $2\epsilon_1+\epsilon_1$.
\end{proof}

\begin{lemma}
\label{statLemma3}
If $d_{H}(x^n,\tilde x^n)\geq k$, then the probability that Bob accepts $\tilde x^n$ is negligible in $n$.
\end{lemma}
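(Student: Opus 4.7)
The plan is to apply Lemma~\ref{statLemma} to the channel $\hat W$ with the two sequences $z^n=((x_1,u_1),\dots,(x_n,u_n))$ (the inputs Alice actually gave to the boxes together with the outputs she received) and $\tilde z^n=((\tilde x_1,u_a),\dots,(\tilde x_n,u_a))$ (what Bob \emph{believes} happened, based on Alice's \texttt{Open} message), playing the roles of the lemma's $x^n$ and $\tilde x^n$ respectively. Conditioned on any fixing of $(u^n,x^n)$ that an adversarial Alice chooses, the vector $r^n=((y_1,v_1),\dots,(y_n,v_n))$ that Bob sees is distributed exactly according to $\hat W^n_{z^n}$, so Lemma~\ref{statLemma} will directly upper-bound the probability that Bob's $\hat W$-typicality test passes.

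First I would verify the two hypotheses of Lemma~\ref{statLemma}. For the size requirement $n_a=|I_a|\geq \lambda n$ with $I_a=\{k:\tilde x_k=x_a\}$: either $\tilde x^n\in\mathcal{T}^n_{P,\epsilon}$, in which case $|I_a|\geq (P(x_a)-\epsilon)n\geq\lambda n$ for $n$ large enough (since $\epsilon=\Theta(k/n)\to 0$ and $\lambda=\tfrac12\min_x P(x)$), or else $\tilde x^n$ fails the typicality check and Bob rejects outright. For the distance requirement $d_H(z^{I_a},\tilde z^{I_a})\geq \kappa n$, partition the disagreement positions into $D_0=\{k:\tilde x_k=0,\,x_k=1\}$ and $D_1=\{k:\tilde x_k=1,\,x_k=0\}$. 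Since $|D_0|+|D_1|=d_H(x^n,\tilde x^n)\geq k$, pigeonhole gives $|D_{x_a}|\geq k/2$ for some $x_a\in\{0,1\}$; taking $a=(x_a,u_a)$ to be the value supplied by Condition~\ref{property1}(1), one has $D_{x_a}\subseteq\{i\in I_a:z_i\neq\tilde z_i\}$, so $\kappa=k/(2n)$ suffices.

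Plugging these choices into Lemma~\ref{statLemma} together with the $\delta$ from Condition~\ref{property1}(1) yields a bound of $2\exp(-n\epsilon_\star^2/3)$ on the probability that $r^n\in\mathcal{T}^n_{\hat W,\epsilon_\star}(\tilde z^n)$, where $\epsilon_\star=\tfrac{1}{2|\{0,1\}\times\mathcal{V}|}\lambda\delta\kappa=\Theta(k/n)$. Bob's actual test uses $\epsilon=\tfrac14\lambda\delta k/n$, which matches $\epsilon_\star$ up to the harmless factor $|\mathcal{V}|$ (absorbed into the constant defining $\epsilon$), so the bound transfers. With $k=n^{2/3}$ we have $n\epsilon^2\in\Omega(n^{1/3})$, so $\exp(-n\epsilon^2/3)$ is negligible in $n$, which gives the claim.

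The main obstacle is careful bookkeeping: aligning the constant in front of $k/n$ in Bob's test parameter $\epsilon$ with the $\epsilon_\star$ that emerges from Lemma~\ref{statLemma}, and being careful that the ``input alphabet'' relevant for the lemma is $\{0,1\}\times\mathcal{U}$ (so that $z^n,\tilde z^n$ are indeed strings over this alphabet and $a$ is a letter of it). A secondary subtlety is the direction-of-flip case split, which formally requires Condition~\ref{property1}(1) to be applicable to both $x_a\in\{0,1\}$; if only one value is guaranteed, the code's minimum distance $d>2k$ (enforced by the syndrome check that is actually verified before this lemma is invoked in the binding proof) ensures that the dominant direction can be chosen to match the available $a$.
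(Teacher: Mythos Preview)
The proposal has a genuine gap in the step establishing $d_H(z^{I_a},\tilde z^{I_a})\geq\kappa n$. Condition~\ref{property1}(1) supplies a \emph{single} fixed letter $a=(x_a,u_a)$; you are not free to pick $x_a$ after seeing which of $|D_0|,|D_1|$ is larger. Your pigeonhole argument therefore only yields $|D_{x_a}|\geq k/2$ when the dominant flip direction happens to coincide with the given $x_a$, and gives nothing otherwise. The attempted rescue via the code's minimum distance does not work: $d>2k$ constrains the \emph{total} number of disagreement positions between two strings sharing a syndrome, but says nothing about how those positions split between $D_0$ and $D_1$. A dishonest Alice who inputs $u_a$ everywhere and then, in \texttt{Open}, flips $k$ output bits all in the direction $x_a\to 1-x_a$ produces $D_{x_a}=\emptyset$ while still having $d_H(x^n,\tilde x^n)\geq k$; your argument does not cover her.

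The paper's proof avoids this by case-splitting on $n_a=N(u_a\mid u^n)$, i.e.\ on Alice's \emph{input} behaviour, rather than on the flip direction. A Chernoff bound then controls $N((x_a,u_a)\mid z^n)$ (respectively $N((1-x_a,u_a)\mid z^n)$), and combining this with the requirement $\tilde x^n\in\mathcal{T}^n_{P,\epsilon}$ forces $d_H(z^{I_a},\tilde z^{I_a})\geq\Omega(k)$ for the \emph{fixed} $a$ from Condition~\ref{property1} in both cases. The underlying reason this works is that once $n_a$ is close to $n$, Chernoff makes $x^n$ itself approximately $P$-typical, and two $P$-typical strings at Hamming distance $\geq k$ necessarily have $|D_0|$ and $|D_1|$ both of order $k$ (their difference $|D_0|-|D_1|=N(0\mid\tilde x^n)-N(0\mid x^n)$ is $O(\epsilon n)\ll k$). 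Your argument would go through if you replaced the bare pigeonhole by this counting; as written, it does not.
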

\begin{proof}
From $d_{H}(x^n,\tilde x^n)\geq k$ follows $d_{H}(z^n,\tilde z^n)\geq k$. Let $n_{a}:=N(u_a|u^n)$,  $I_{a}:=\{k:~\tilde z_{k}=(x_a,u_a)\}$ and $p := W^A(x_a|u_a)$.
For all $w^n \in \mathcal{T}^n_{P,\epsilon}$, we have
$$|N(x_a|w^n)-np|\leq \epsilon n = \frac 1 4 \lambda \delta k/n \cdot n \leq \frac{1}{8}k p\;,$$
since $\lambda \leq p / 2$ and $\delta \leq 1$.
We distinguish two cases:

\noindent
(1) $n_{a}\leq (n-k/2)$:   The expectation of $N((x_a,u_a)|z^n)$ is smaller than or equal to $(n-\frac{k}{2})p$.
Since $k^2/n\in \Omega(n^{1/3})$, it follows from Lemma \ref{chernoff1} that with overwhelming probability 
\begin{align*}
N((x_a,u_a)|z^n)
&\leq \left(n-\frac{k}{2}\right)p+\frac{k }{8} p\\
&=\left(n-\frac{3}{8}k\right) p.
\end{align*}
But since Bob only accepts if $\tilde x^n \in \mathcal{T}^n_{P,\epsilon}$, we have
\begin{align*}
d_{H}(z^{I_{a}},\tilde z^{I_{a}})
&\geq \left(n-\frac{1}{4}k \right)p - \left(n-\frac{3}{8}k  \right)p \\
&= \frac{1}{4}k p
\end{align*}
and the claim follows from Lemma~\ref{statLemma}.

\noindent
(2) $n_{a}>(n-k/2)$: Then the expectation of $N((1 - x_a,u_a)|z^n)$ is greater than or equal to $(n - \frac{k}{2})(1-p)$. As $k^2/n\in \Omega(n^{1/3})$ Lemma \ref{chernoff1} implies that with overwhelming probability 
\begin{align*}
 N(( 1 -x_a,u_a)|z^n)
&\geq \left(n- \frac{k}{2}\right)(1-p)-\frac{k}{8}(1-p)\\
&=n(1-p)-\frac{5}{8}k (1-p).
\end{align*}
But since Bob only accepts if $\tilde x^n \in \mathcal{T}^n_{P,\epsilon}$, we have
\begin{align*}
d_{H}(z^{I_{a}},\tilde z^{I_{a}})
&\geq \left(n-\frac{5}{8}k\right)(1-p) - \left(n-\frac{1}{4}k\right)(1-p) \\
&= \frac{1}{4}k (1-p)
\end{align*}
and the claim follows from Lemma~\ref{statLemma}.
\end{proof}

\begin{lemma}
The protocol satisfies the binding condition with an error negligible in $n$.
\end{lemma}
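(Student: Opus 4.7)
My plan is to reduce binding to Lemma \ref{statLemma3} using the fact that the code has minimum distance $d > 2k$. Suppose a cheating Alice can successfully open the commitment to two distinct values $b^l_0 \neq b^l_1$. Let $x^n$ denote the actual outputs she received from the boxes (whatever inputs $u^n$ she chose), and let $s$, $r$, $c$ be the syndrome, seed, and masked value she sent during \texttt{Commit}. To open to $b^l_i$, she must produce some $\tilde x^n_i$ with $syn(\tilde x^n_i) = s$ and $b^l_i \oplus \ext(r,\tilde x^n_i) = c$, and $\tilde x^n_i$ must pass Bob's typicality checks.

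Since $syn(\tilde x^n_0) = syn(\tilde x^n_1)$, their difference $\tilde x^n_0 \oplus \tilde x^n_1$ lies in the code. Since $b^l_0 \neq b^l_1$ forces $\ext(r,\tilde x^n_0) \neq \ext(r,\tilde x^n_1)$ and hence $\tilde x^n_0 \neq \tilde x^n_1$, this difference is a nonzero codeword, so
\[
d_H(\tilde x^n_0,\tilde x^n_1) \geq d > 2k.
\]
The triangle inequality then gives $d_H(x^n,\tilde x^n_0) + d_H(x^n,\tilde x^n_1) \geq d_H(\tilde x^n_0,\tilde x^n_1) > 2k$, so at least one of the two openings satisfies $d_H(x^n,\tilde x^n_i) \geq k$.

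By Lemma \ref{statLemma3}, for this $\tilde x^n_i$ the probability that Bob accepts (i.e.\ that the typicality test $r^n \in \mathcal{T}^n_{\hat W,\epsilon}(\tilde z^n)$ together with $\tilde x^n_i \in \mathcal T^n_{P,\epsilon}$ passes) is negligible in $n$. Hence except with negligible probability there is at most one value $b^l$ that Alice can open successfully, which is exactly the binding condition.

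The main subtlety to get right is the bookkeeping of which $x^n$ plays the role of the ``true'' string in the application of Lemma \ref{statLemma3}: Alice is free to choose her box inputs $u^n$ and her opening string $\tilde x^n$ arbitrarily, but the non-signaling boxes do fix the joint distribution of $(x^n,r^n)$ once Bob chooses his uniform $v^n$, and Lemma \ref{statLemma3} is stated precisely for deviations of the announced string from the actually received one. I do not expect any other obstacle; the code parameter $d > 2k$ was chosen exactly so that the triangle-inequality step forces one of the two opening strings into the regime covered by Lemma \ref{statLemma3}.
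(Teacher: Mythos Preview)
Your proof is correct and follows essentially the same approach as the paper: two distinct strings with the same syndrome differ by a nonzero codeword and hence have Hamming distance at least $d>2k$, so by the triangle inequality at least one of them is at distance $\geq k$ from Alice's actual output $x^n$, and Lemma~\ref{statLemma3} then shows Bob rejects that string except with negligible probability. Your write-up is in fact a bit more explicit than the paper's (spelling out why $\tilde x^n_0\neq\tilde x^n_1$ and invoking the triangle inequality), but the argument is the same.
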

\begin{proof}
Any two strings $s^n \neq \tilde s^{n}$ with $syn(s^n)=syn(\tilde s^{n})$ have distance at least $d$. So at least one of the two strings has distance at least $k$ from Alice's output $x^n$. The probability that Bob accepts this string in the opening phase is negligible according to lemma \ref{statLemma3}.
\end{proof}

\subsection{Protocol II} 
Protocol I is not hiding if for every fixed input of Alice a dishonest Bob can choose an input such that he has perfect information about Alice's output. This is the case for example with the above mentioned {\em NL box\/}. But, as shown in \cite{BCUWW06}, this box allows for bit commitment. Therefore, we present a second protocol that allows to securely implement bit commitment for such boxes. The protocol, which is similar to a protocol proposed without a security proof in \cite{WolWul05c} already, works as follows: Alice gives random inputs to all her boxes. Then she applies privacy amplification to the string of inputs and uses the resulting key to hide the bit she is committed to. In the opening phase Alice sends all her inputs/outputs to Bob. Bob performs statistical tests on the input/output of Alice that allow him to detect if Alice has changed more than $\sqrt{n}$ values. We use again parity check bits of a linear code to make sure that a dishonest Alice cannot change $\sqrt{n}$ values except with negligible probability.

 Alice and Bob share $n$ identical non-signaling boxes given by $W: \mathcal{U}\times\mathcal{V}\rightarrow \{0,1\}^2$. We define the corresponding matrix $\hat W$ as in Section~\ref{protocol1}. In the following we always assume that $W^A(x|u)\neq0$ for all $x\in \{0,1\},u\in\mathcal{U}$. For the following protocol to be secure we require $W$ to fulfill the following condition:
\begin{property}
\label{non-trivial3}
There exist $u_{0},u_{1} \in \mathcal{U},~u_{0}\neq u_{1}$, such that the set $D:=\{\hat W_{0u_{0}},\hat W_{1u_{1}},\hat W_{0u_{0}},\hat W_{1u_{1}}\}$ contains at most one non-extreme point of $\conv(\hat W)$, i.e., there is $c_{0} \in \{0u_{0},1u_{1},0u_{0},1u_{1}\}$ such that for all $c\in \{0u_{0},1u_{1},0u_{0},1u_{1}\}\backslash \{c_{0}\}$ it holds that for all probability distributions $P$ with $P(c)=0$
\begin{align*}
\Big \Vert \hat W_{c}-\sum_{z}P(z) \hat W_{z} \Big \Vert_{1} \geq \delta\;.
\end{align*}
\end{property}

We label the inputs of Alice as $\{0,\ldots,|\mathcal{U}|-1\}$ and assume that $u_{0}=0$ and $u_{1}=1$. In the protocol, we will require Alice to choose her input uniformly from $\{0,1\}$, and Bob to choose his input uniformly from $\mathcal V$. If both are honest, the joint distribution of the inputs and outputs of Alice and Bob is
$$P(x,y,u,v):=
\begin{cases}
 \frac{1}{2|\mathcal{V}|}W(xy|uv),  & \text{if }~u \in \{0,1\}\\
 0, &  \text{else}.
\end{cases}
$$
If Alice is honest, the joint distribution of her input and output is
$$Q(x,u):=
\begin{cases}
 \frac{1}{2}W^{A}(x|u),  & \text{if }~u \in \{0,1\}\\
 0, &  \text{else.}
\end{cases}
$$


Let $\lambda:=\frac{1}{4}\text{min}\{Q(x,u), (x,u) \in \{0,1\}^2\}.$
Let $p_{0}:=\min\{W^A(x|u),(x,u) \in \{0,1\}^2\}$. Note that we assumed $p_0 > 0$ and that obviously we also have $p_0 \leq \frac12$. Let $k_1$ be the security parameter, $k_{2}:=k_1(4p_{0}+1)/2p_{0}^2$, $\epsilon:=\frac{1}{4} \lambda \delta k_1/n$, $d\geq k_1+2k_2+1$, $l>0$ and let $H$ be the parity check matrix of a $[n,Rn,d]$-linear code with $Rn\geq n/2+\frac{3}{2}k_1+l/2$. We choose $k_1:=n^{2/3}$ and $l:=n-2n(1-R)-3k_1$. This implies $k_1,k_1^2/n,n\epsilon^2 \in \Omega(n^{1/3})$ and $l \in (2R-1)n-O(n^{2/3})$.  If $n$ is big enough, then $l>0$. Let $\ext:\{0,1\}^*\times \{0,1\}^n\rightarrow \{0,1\}^l$ be a two-universal hash function.

\vspace{0.2cm} 
\noindent
Commit($b^l$):
\begin{itemize}
 \item Alice chooses $u^n \in_{R} \{0,1\}^n$, Bob chooses $v^n \in_{R} \mathcal{V}$.
 \item Alice and Bob input $u^n$ and $v^n$ component-wise to the boxes. Alice gets $x^n \in \{0,1\}^n$ and Bob $y^n \in \{0,1\}^n$.
 \item Alice chooses $r_{2} \in_{R} \{0,1\}^*$ and sends \\$(syn(u^n),syn(x^n),r_{2},b^l \oplus \ext(r_{2},x^n))$ to Bob.
\end{itemize}
Open():
\begin{itemize}
\item Alice sends Bob $u^n,x^n \text{ and } b^l$.
\item Bob checks:
\begin{itemize}
\item $syn(u^n)$ and $syn(x^n)$  are correct
\item $b^l \oplus \ext(r_{2},u^n)$ is correct
\item $((y_{1},v_{1}),..,(y_{n},v_{n})) \in \mathcal{T}_{\hat W,\epsilon}^n((x_{1},u_{1}),..,(x_{n},u_{n}))$
\item $((x_{1},u_{1}),\ldots,(x_{n},u_{n})) \in \mathcal{T}_{Q,\epsilon}^n$
\end{itemize}
\item If all the checks pass successfully, Bob accepts and outputs $b^l$, otherwise he rejects. 
\end{itemize}

\subsection{Security}
Let $z^n:=((x_{1},u_{1}),\ldots,(x_{n},u_{n}))$ be Alice's input and output, $\tilde z^n:=((\tilde x_{1},\tilde u_{1}),\ldots,(\tilde x_{n},\tilde u_{n}))$ the values Alice sends to Bob in the opening phase and $r^n:=((y_{1},v_{1}),\ldots,(y_{n},v_{n}))$ Bob's inputs and outputs. For all $c \in (\{0,1\}\times \mathcal{U})$ we define the sets $I_{c}:=\{i:\tilde z_{i}=c\}$.
\begin{lemma}
The protocols \texttt{Commit} and \texttt{Open} satisfy the correctness condition.
\end{lemma}
\begin{proof}
Bob always accepts \texttt{Commit}. If Alice follows the protocol, then $syn(u^n), syn(x^n)$ and $b^l \oplus \ext(r_{2},u^n)$ are correct. From Lemma~\ref{typical2} it follows that 
\begin{align*}
\Pr[r^n \in \mathcal{T}_{\hat W,\epsilon}^n(z^n))]&=\hat W_{z^n}(\mathcal{T}_{\hat W,\epsilon}(z^n))\\&\geq 1-8|\mathcal U||\mathcal V|\exp(-n\epsilon^2/2)
\end{align*}
and from Lemma~\ref{typical1} it follows that 
\begin{align*}
\Pr[z^n \in \mathcal{T}_{Q,\epsilon}^n]
&=Q^n(\mathcal{T}_{Q,\epsilon}(z^n))\\
&\geq 1-4|\mathcal U|\exp(-n\epsilon^2/2)\;.
\end{align*}
Thus, Bob accepts \texttt{Open} with overwhelming probability and outputs $b^l$, the value Alice was committed to.
\end{proof}

\begin{lemma}
 The protocol \texttt{Commit} satisfies the privacy condition with an error negligible in $n$.
\end{lemma}

\begin{proof}
Let us assume that Alice is honest. Since the box is non-signaling, Bob's values $Y^n$ and $V^n$ are independent of $U^n$. Since Alice chooses $U^n$ uniformly from $\{0,1\}^n$, we have
\[\entropy_\infty(U^n) = n\;.\]
All the information Bob gets about $U^n$ is $syn(U^n)$ and $syn(X^n)$. Let $\epsilon_1 := 2^{-k_1}$. Using Lemma~\ref{chainRule} we get
\begin{align*}
\entropy^{\epsilon_1}_\infty(U^n|syn(U^n)syn(X^n)) &\geq n - 2n(1-R) - k_1\\&\geq l+2k_1\;.
\end{align*}
If follows from Lemma~\ref{Leftover} that extracting $l$ bits makes the key uniform with an error of at most $2\epsilon_1 = 2 \cdot 2^{-k_1}$. The statement follows.
\end{proof}

The proof of the binding condition is slightly more involved. Because our boxes are non-signaling, Alice has the possibility of delaying her input to the box until the opening phase. Hence, a general strategy for her is to
give input to some of the boxes in the commit phase, and to delay the input to some of the boxes until the opening phase. And she may send incorrect values about her input/output to/from the boxes to Bob in the opening phase.
Note that we can ignore the case where she does not give any input to some boxes, as she might as well just give input but ignore the output.

\begin{lemma}
\label{statLemma4}
If $d_{H}(z^n,\tilde z^n)\geq k_{1}$, then the probability that Bob accepts $\tilde z^n$ is negligible.
\end{lemma}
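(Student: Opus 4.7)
The plan is to show that any opening $\tilde z^n$ with $d_H(z^n,\tilde z^n)\geq k_1$ causes Bob's $\hat W$-typicality check on $r^n$ against $\tilde z^n$ to fail with overwhelming probability, by reducing everything to Lemma~\ref{statLemma}. Condition~\ref{non-trivial3} singles out at most one label $c_0$ (in the $\{0,1\}^2$ part of $\hat W$'s domain) whose row may sit in the convex hull of the others; every other label $c\in\{0,1\}^2$ is $\delta$-separated from that convex hull and is therefore an admissible ``$a$'' in Lemma~\ref{statLemma}.

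Before the main argument I would collect two structural ingredients. First, the $Q$-typicality check forces every $\tilde z_i$ into $\{0,1\}^2$ (since $Q$ is supported only there) and gives $N(c\mid\tilde z^n)\geq (Q(c)-\epsilon)n\geq 2\lambda n$ for each $c\in\{0,1\}^2$, which supplies the hypothesis $n_a\geq\lambda n$ of Lemma~\ref{statLemma}. Second, the committed syndromes $syn(u^n)$ and $syn(x^n)$ together with the minimum distance $d\geq k_1+2k_2+1$ of the code imply that each of $u^n,x^n$ either coincides with or is at distance at least $d$ from its opening counterpart; so $|D|:=d_H(z^n,\tilde z^n)$ is automatically upgraded from $\geq k_1$ to $\geq d\geq k_1+2k_2+1$.

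Write $D_c:=\{i\in D:\tilde z_i=c\}$ and $D_{\mathrm{ext}}:=\bigcup_{c\neq c_0}D_c$. The key step is a balancing argument showing $|D_{\mathrm{ext}}|=\Omega(k_1)$ except on a negligible event. The $Q$-typicality of $\tilde z^n$ fixes $N(c_0\mid\tilde z^n)$ within $\epsilon n$ of $Q(c_0)n$, while a Chernoff bound on the output bits $x_i$ given Alice's actual inputs $u^n$ pins $N(c_0\mid z^n)$ within $k_2$ of its conditional expectation $N(u_{c_0}\mid u^n)\cdot W^A(x_{c_0}\mid u_{c_0})$ except on a tail of probability $\exp(-\Omega(k_2^2/n))$---this is exactly why $k_2$ is chosen as $k_1(4p_0+1)/(2p_0^2)$. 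Since the syndrome constraint either pins $\tilde u^n=u^n$ or forces $d_H(u^n,\tilde u^n)\geq d$ (already absorbed into $|D|$), combining these estimates gives $|N(c_0\mid\tilde z^n)-N(c_0\mid z^n)|\leq 2\epsilon n+k_2$. Since this difference equals $|D_{c_0}|-|D_{c_0\to\mathrm{ext}}|$ and $|D_{c_0\to\mathrm{ext}}|\leq|D_{\mathrm{ext}}|$, together with $|D|=|D_{c_0}|+|D_{\mathrm{ext}}|\geq k_1+2k_2+1$ this forces $|D_{\mathrm{ext}}|\geq\Omega(k_1)$.

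Finally, pigeonhole over the three extreme labels produces some $c\neq c_0$ with $|D_c|=\Omega(k_1)$. Applying Lemma~\ref{statLemma} with $a=c$ and $\kappa n=|D_c|$ bounds the probability that Bob's $\hat W$-typicality check on $r^n$ against $\tilde z^n$ passes by $2\exp(-\Omega(k_1^2/n))=2\exp(-\Omega(n^{1/3}))$, negligible in $n$. A union bound over the Chernoff failure event and this typicality bound completes the proof. The main obstacle, and the reason for the relatively involved choice of $k_2$ and $d$, is the ``hard'' scenario in which Alice concentrates nearly all her cheating on the non-extreme label $c_0$ (where Lemma~\ref{statLemma} is unusable with $a=c_0$); the balancing argument above is designed precisely to rule that scenario out.
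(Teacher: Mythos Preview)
Your proposal has a genuine gap. The claim that the syndrome checks ``upgrade $|D|\geq k_1$ to $|D|\geq d$'' is false: Bob compares $syn(\tilde u^n)$ and $syn(\tilde x^n)$ against the \emph{committed} syndromes, which a dishonest Alice chose freely in \texttt{Commit} and which need not equal $syn(u^n)$ or $syn(x^n)$ for her actual inputs/outputs. Consequently there is no code-distance relation between $\tilde z^n$ and the true $z^n$, and the lemma must be proved using only the two typicality checks. (Indeed the paper's proof never invokes the syndrome check; the code is used only later, in Lemmas~\ref{delayLemma} and~\ref{bindingLemma}.)

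This also collapses your balancing argument. To get $|N(c_0\mid\tilde z^n)-N(c_0\mid z^n)|\leq 2\epsilon n+k_2$ you need $N(u_{c_0}\mid u^n)$ close to $n/2$, but Alice's actual inputs $u^n$ are arbitrary (and may even lie outside $\{0,1\}$), so $N(c_0\mid z^n)$ can be anywhere. The paper addresses precisely this obstacle by a case split on Alice's actual input frequencies. In Case~(1) some $u'\in\{0,1\}$ satisfies $N(u'\mid u^n)\leq n/2-k_1/8$; a Chernoff bound then shows $N(xu'\mid z^n)$ is significantly below $nQ(x,u')$ for \emph{both} $x\in\{0,1\}$, so the $Q$-typicality of $\tilde z^n$ forces $d_H(z^{I_{xu'}},\tilde z^{I_{xu'}})=\Omega(k_1)$ for both $x$; since at most one of the four labels is non-extreme, one of these two is extreme and Lemma~\ref{statLemma} applies. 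In Case~(2) all $|N(u\mid u^n)-n/2|\leq k_1/8$; then $z^n\in\mathcal{T}^n_{W^A,\epsilon}(u^n)$ w.h.p., pigeonhole gives one label with distance $\geq k_1/4$, and a counting argument shows that either a \emph{second} label also carries $\Omega(k_1)$ distance (so again one of the two is extreme) or $\tilde z^n\notin\mathcal{T}^n_{Q,\epsilon}$. Note finally that $k_2$ plays no role in this lemma; its specific value $k_1(4p_0+1)/(2p_0^2)$ is tailored to the computation in Lemma~\ref{delayLemma}, not here.
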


\begin{proof}
For all $w^n \in \mathcal{T}_{Q,\epsilon}^n$ it holds that
\[|N(xu|w^n)-n Q(x,u)|\leq \epsilon n = \frac 1 4 \lambda \delta k_1/n \cdot n \leq \frac{1}{64} k_1\;, \]
since $\lambda \leq \min_{x,u} Q(x,u) / 4 \leq 1/16$ 
and $\delta \leq 1$.

We distinguish the following two cases:\\
(1) There exists $u' \in \{0,1\}$ 
such that $N(u'|u^n)\leq n/2-k_1/8$:
For all $x \in \{0,1\}$ the expectation of
$N(x u'|z^n)$ is equal to
$(n/2-\frac{k_1}{8}) W^A(x|u')$.
Since $k_{1}^2/n\in \Omega(n^{1/3})$ it follows from Lemma \ref{chernoff1} that with overwhelming probability
\begin{align*}
N(x u'|z^n)
&\leq \Big (\frac{n}{2}
-\frac{1}{8}k_1 \Big)  W^A(x|u')
+\frac{1}{16}k_1 W^A(x|u')\\
&= \Big (\frac{n}{2}-\frac{1}{16}k_1 \Big) W^A(x|u')\;.
\end{align*}
But since Bob only accepts if $\tilde z^n \in \mathcal{T}_{Q,\epsilon}^n$, we have
$d_{H}(z^{I_{0u'}},\tilde z^{I_{0u'}})\geq \frac{1}{32}k_1$ and $d_{H}(z^{I_{1u'}},\tilde z^{I_{1u'}})\geq \frac{1}{32}k_1$, 
and the claim follows from Lemma~\ref{statLemma}.\\

(2) For all $u \in \{0,1\}$ we have $|n/2-N(u|u^n)|\leq k_1/8$: Since $\epsilon^2 n\in \Omega(n^{1/3})$ it follows from Lemma~\ref{typical2} that with overwhelming probability 
 we have  $z^n\in \mathcal{T}_{W^A,\epsilon}^n(u^n)$. 
Assume 
$z^n\in \mathcal{T}_{W^A,\epsilon}^n(u^n)$.
There exists a value $(x', u') \in \{0,1\}^2$ such that 
$d_{H}(z^{I_{x'u'}},\tilde z^{I_{x'u'}})\geq \frac{1}{4}k_1$. 
Therefore
\begin{align*}
&N(x'u'|z^n)+d_{H}(z^{I_{x'u'}},\tilde z^{I_{x'u'}})\\
&\geq nW^A(x'|u')/2-k_1W^A(x'|u')/8-\epsilon n+k_1/4\\
&\geq nW^A(x'|u')/2+\frac{7}{64}k_1\;.
\end{align*}
If there exists $(x'',u'')\neq (x',u') \in \{0,1\}^2$ such that
$d_{H}(z^{I_{x''u''}},\tilde z^{I_{x''u''}})\geq \frac{1}{32}k_1$,
then the claim follows from Lemma~\ref{statLemma}. Otherwise
$\tilde z^n \notin \mathcal{T}_{Q,\epsilon}^n$.
\end{proof}

Next, we will prove a technical lemma: 
\begin{lemma}
\label{binom}
 For any $n$ it holds that, if $k \leq np$,
\begin{align*}
\sum_{i=0}^{k}\binom n i p^{i}(1-p)^{n-i} \leq 2^{-2np^2+4pk}.
\end{align*}
\end{lemma}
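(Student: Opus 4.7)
The sum in question is the lower-tail probability $\Pr[X\leq k]$ for $X:=\sum_{i=1}^{n}X_{i}$ with the $X_{i}$ i.i.d.\ Bernoulli($p$), so my plan is to apply Hoeffding's inequality (Lemma~\ref{hoeffding}) and then convert the resulting natural-exponential bound to the base-$2$ bound demanded by the statement. Setting $\mu=np$ and $\delta:=np-k\geq 0$ (legitimate precisely because of the hypothesis $k\leq np$), Lemma~\ref{hoeffding} yields $\Pr[X<\mu-\delta]\leq\exp(-2\delta^{2}/n)$; the non-strict form with $X\leq k$ follows because $X$ is integer-valued, so $\Pr[X\leq k]=\lim_{\varepsilon\downarrow 0}\Pr[X<k+\varepsilon]$. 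This gives
\[
\sum_{i=0}^{k}\binom{n}{i}p^{i}(1-p)^{n-i} \;\leq\; \exp\!\bigl(-2(np-k)^{2}/n\bigr).
\]

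It then suffices to establish the purely algebraic inequality $(np-k)^{2}\geq np(np-2k)\ln 2$, which after multiplication by $2/n$ and a change of base directly yields $\exp(-2(np-k)^{2}/n)\leq 2^{-2np^{2}+4pk}$. To prove it, I would expand both sides as a polynomial in $k$ and complete the square:
\[
(np-k)^{2} - np(np-2k)\ln 2 \;=\; \bigl(k - np(1-\ln 2)\bigr)^{2} + (np)^{2}(1-\ln 2)\ln 2,
\]
which is manifestly nonnegative since both $1-\ln 2$ and $\ln 2$ are strictly positive. Note that the hypothesis $k\leq np$ plays no role in this algebraic step; it was needed only to point Hoeffding's bound at the correct tail.

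The main (and rather mild) obstacle is recognising that the target $2^{-\cdot}$ bound is strictly weaker than Hoeffding's $e^{-\cdot}$ bound, so that the extra $k^{2}/n$ term hidden in the expansion $(np-k)^{2}/n=np^{2}-2pk+k^{2}/n$ supplies precisely the slack needed to absorb the base-change factor $\ln 2<1$. The sum-of-squares identity above makes this quantitative, and beyond it no further estimation is required.
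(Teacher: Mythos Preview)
Your proof is correct and follows the same strategy as the paper: recognise the sum as $\Pr[X\leq k]$ for a binomial $X$, apply Hoeffding (Lemma~\ref{hoeffding}) with $\delta=np-k$, and then do some algebra to reach the base-$2$ bound. The only difference is in that last algebraic step. The paper splits it into two trivial inequalities,
\[
\exp\!\bigl(-2(np-k)^{2}/n\bigr)\;\leq\;2^{-2(np-k)^{2}/n}\;\leq\;2^{-2np^{2}+4pk},
\]
the first because $e>2$ and the second because $(np-k)^{2}/n=np^{2}-2pk+k^{2}/n\geq np^{2}-2pk$. You instead prove the single inequality $(np-k)^{2}\geq np(np-2k)\ln 2$ in one shot via completing the square. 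Both arguments are valid; the paper's is shorter, while yours makes the role of the slack term $k^{2}/n$ and the base-change factor $\ln 2$ explicit in a single identity.
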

\begin{proof}
 Let $X_{1},X_{2},\ldots,X_{n}$ be random variables with $Pr[X_{i}=1]=p$ and $Pr[X_{i}=0]=(1-p)$. Let $X=\sum_{i=1}^{n}X_{i}$. Then using Lemma \ref{hoeffding} and setting $t:=np-k$
\begin{align*}
\sum_{i=0}^{k}\binom n i p^{i}(1-p)^{n-i} =\Pr[X \leq k]
&\leq \exp(-2t^2/n)\\
&\leq 2^{-2(np-k)^2/n}\\
&\leq 2^{-2np^2+4pk}.
\end{align*}
\end{proof}

\begin{lemma}
\label{delayLemma}
If Alice does not input any values to at least $k_{2}$ boxes before sending $syn(x^n)$ to Bob, then 
Bob does accept the opening of the protocol with negligible probability.
\end{lemma}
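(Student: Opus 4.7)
The plan is to show that an Alice who leaves at least $k_2$ boxes unqueried at the moment she sends the syndromes cannot pass the opening checks. Let $T$ denote the set of delayed indices, $|T|\geq k_2$, and let $(u'^n,x'^n)$ be the (possibly random) strings whose syndromes Alice committed to. For each $i\in T$, the pair $(u'_i,x'_i)$ is a function of Alice's view strictly before she interacted with box $i$; since Bob sends nothing during \texttt{Commit} and non-signaling guarantees that Bob's marginal $(y_i,v_i)\sim \tfrac{1}{|\mathcal V|}W^B(y_i|v_i)$ is untouched by Alice's (absent) input, this pair is statistically independent of $(y_i,v_i)$. In the opening Alice sends $(\tilde u^n,\tilde x^n)$ satisfying $syn(\tilde u^n)=syn(u'^n)$ and $syn(\tilde x^n)=syn(x'^n)$; by the minimum distance $d\geq k_1+2k_2+1$ of the code, for each of these strings either the opened copy equals the committed one exactly, or the two differ in at least $d$ positions. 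I treat the two cases separately.

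Case 1 (exact opening). On $T$ the opened pair $(\tilde x_i,\tilde u_i)=(x'_i,u'_i)$ is independent of $(y_i,v_i)$, so the empirical joint distribution of the quadruples on $T$ concentrates around a product $Q_A\otimes P_{YV}$ instead of the honest $Q(x,u)\,\hat W_{xu}(y,v)$. Property~\ref{non-trivial3} guarantees that at least three of the four rows $\hat W_{xu}$ with $(x,u)\in\{0,1\}^2$ are extreme in $\conv(\hat W)$, each at $L^1$-distance $\geq\delta$ from every convex combination of the others; simultaneously, the marginal typicality test $\mathcal T^n_{Q,\epsilon}$ forces Alice's guess marginal $Q_A$ on $T$ to stay essentially equal to $Q$, and in particular to place mass at least $\lambda$ on each extreme pair. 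The induced population-level deviation on $T$ is then of order $|T|\lambda\delta\geq k_2\lambda\delta$, well above the allowed slack $\epsilon n=\lambda\delta k_1/4$. Combining this with the binomial tail bound of Lemma~\ref{binom}, which controls the number of delayed boxes on which Alice's guess happens to coincide with the output she would obtain by inputting $u'_i$ during the opening phase, and applying Lemma~\ref{statLemma} on the $T$-restricted sequence (with $\kappa=\Omega(\lambda p_0)$), the $\hat W$-typicality test fails except with negligible probability; the precise value $k_2=k_1(4p_0+1)/(2p_0^2)$ is exactly what makes the exponent of Lemma~\ref{binom} dominate $k_1$.

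Case 2 (opened string differs from the committed one in at least $d$ positions, either for $u$ or for $x$). At most $|T|=k_2$ of the differing coordinates can lie in $T$, so at least $d-k_2\geq k_1+k_2+1>k_1$ of them lie in $T^c$. On $T^c$ Alice genuinely queried the box and hence $(u_i,x_i)=(u'_i,x'_i)$, so the opened subsequence $\tilde z^{T^c}$ disagrees with the actual $z^{T^c}$ in more than $k_1$ positions. This is exactly the hypothesis of Lemma~\ref{statLemma4} (an instance of Lemma~\ref{statLemma}), whence the acceptance probability is negligible. The main obstacle is Case~1: one has to argue that even an Alice who interacts adaptively with the delayed boxes during the opening cannot retroactively couple her already-committed guesses to $(y_i,v_i)$. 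This is precisely where non-signaling is invoked, and the combination of the extremality constant $\delta$, the $\lambda$-lower bound on Alice's forced marginal, and the binomial tail of Lemma~\ref{binom} is what rules out her cleverest strategies.
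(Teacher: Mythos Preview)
Your two-case split and the independence argument in Case~1 are unnecessary detours; the paper's proof is much shorter and uses only the binomial tail. The paper argues as follows. Without loss of generality Alice eventually inputs to every box (possibly in the open phase), so she has a genuine output string $x^n$. She must open some $\tilde x^n$ with $syn(\tilde x^n)=s_0$, and by Lemma~\ref{statLemma4} she is rejected unless $d_H(z^n,\tilde z^n)<k_1$, in particular unless $d_H(x^n,\tilde x^n)<k_1$. Since any two strings in the coset of $s_0$ are at least $d>2k_2$ apart, there is at most one coset element within distance $k_1+k_2$ of any fixed extension of the non-delayed outputs; this target $x'^n$ is therefore determined \emph{before} the delayed boxes are touched. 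On each of the $\geq k_2$ delayed coordinates the true output disagrees with $x'^n$ with probability at least $p_0$, so the number of disagreements stochastically dominates a $\mathrm{Bin}(k_2,p_0)$ variable, and Lemma~\ref{binom} with the chosen $k_2=k_1(4p_0+1)/(2p_0^2)$ gives $\Pr[d_H(x^n,x'^n)\leq k_1]\leq 2^{-k_1}$. That is the entire proof.

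Your Case~1 contains a concrete gap: the claim that the test $\mathcal T^n_{Q,\epsilon}$ ``forces Alice's guess marginal $Q_A$ on $T$ to stay essentially equal to $Q$'' is false. The typicality test is global, and since $|T|$ and the slack $\epsilon n$ are both $\Theta(n^{2/3})$, Alice can set every committed pair on $T$ to a single symbol and compensate on $T^c$ without violating $\mathcal T^n_{Q,\epsilon}$. Once $Q_A$ is unconstrained, your product-versus-correlated deviation estimate collapses. You then invoke Lemma~\ref{binom}, which is indeed the right tool, but it is never made clear what the independence argument still contributes or how the two pieces are meant to combine into a single bound. In Case~2 you assert that on $T^c$ the committed $(u'_i,x'_i)$ equals the actual $(u_i,x_i)$; an adversarial Alice is under no such obligation, and the claim only holds if you \emph{define} $x'^n$ to be the unique nearby coset element as above --- which is precisely the paper's move and renders the case split superfluous.
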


\begin{proof}
Alice does not give any input to at least $k_2$ boxes before sending a syndrome $s_{0}$ to Bob. Later she gives her inputs to the remaining $k_2$ boxes and gets a random output $x_i$ for each box. We know that any two strings $s^n \neq \tilde s^n$ with $syn(s^n)=syn(\tilde s^n)$ have distance at least $d>2k_2$. We can bound the probability that the output string has distance at most $k_1$ to a string with syndrome $s_{0}$ by
\[\sum_{i=0}^{k_{1}}{ \genfrac(){0pt}{}{k_{2}}{i}} p_{0}^i(1-p_{0})^{k_2-i}.\]
Note that since $4p_0+1>1$ and $2p_0 \leq 1$, we have $p_0k_2 \geq k_1$. So we can apply Lemma \ref{binom} and get an upper bound on this probability of
\[2^{-2k_2p_0^2+4k_1p_0} = 2^{-2\frac{k_1 (4p_0 + 1)}{2p_0^2} p_0^2+4k_1p_0} = 2^{-k_1}\;.\]
The statement now follows from Lemma~\ref{statLemma4}.
\end{proof}

\begin{lemma}
\label{bindingLemma}
If Alice changes only $k_{1}$ values and delays only $k_{2}$ inputs, then the protocol is binding.
\end{lemma}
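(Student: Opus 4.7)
The plan is to proceed by contradiction: suppose that once Commit has finished, Alice can produce two distinct accepting openings $(\tilde u_0^n, \tilde x_0^n, \beta_0)$ and $(\tilde u_1^n, \tilde x_1^n, \beta_1)$ with $\beta_0 \neq \beta_1$, and derive a contradiction by sandwiching $d_H(\tilde u_0^n, \tilde u_1^n)$ between incompatible bounds.

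For the lower bound I use the code. Both openings must reveal syndromes matching the values committed during Commit, so $syn(\tilde u_0^n) = syn(\tilde u_1^n)$. Both must also satisfy $\ext(r_2, \tilde u_i^n) = \beta_i \oplus C$ for $i = 0,1$, where $C$ denotes the committed mask. Since $\beta_0 \neq \beta_1$, the two extracted values differ, hence $\tilde u_0^n \neq \tilde u_1^n$. But then $\tilde u_0^n \oplus \tilde u_1^n$ is a nonzero codeword of the underlying linear $[n,Rn,d]$-code, so $d_H(\tilde u_0^n, \tilde u_1^n) \geq d \geq k_1 + 2k_2 + 1$.

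For the upper bound I model Alice's two possible post-Commit continuations. By hypothesis she delays inputs on a set $D$ with $|D| \leq k_2$; the remaining $n - |D|$ boxes have been used during Commit, so their input--output pairs are irrevocably fixed at that point. In each of her two hypothetical continuations she may decide, at Open time, what to feed into the delayed boxes, arriving at an internal sequence $\hat z_i^n$ of input--output pairs. The two internal sequences $\hat z_0^n, \hat z_1^n$ agree on every position outside $D$, hence $d_H(\hat z_0^n, \hat z_1^n) \leq |D| \leq k_2$. The hypothesis that Alice changes at most $k_1$ values in an accepting opening yields $d_H(\hat z_i^n, \tilde z_i^n) \leq k_1$ for $i = 0, 1$. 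The triangle inequality then gives $d_H(\tilde u_0^n, \tilde u_1^n) \leq d_H(\tilde z_0^n, \tilde z_1^n) \leq 2k_1 + k_2$.

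Combining the two bounds forces $k_1 + 2k_2 + 1 \leq 2k_1 + k_2$, i.e.\ $k_2 + 1 \leq k_1$. However, the parameter choice $k_2 = k_1(4p_0+1)/(2p_0^2)$ with $p_0 \leq 1/2$ gives $k_2 \geq 2k_1$, a contradiction. Hence at most one value $\beta$ can be validly opened, which is exactly the binding condition. The main conceptual step, and the one that deserves care, is the claim that across two hypothetical Open-phase continuations of the same Commit transcript Alice's internal state can differ only on delayed positions; this uses the non-signaling property together with the fact that each box is consumed in a single use, so commit-phase box usages cannot be rewritten during Open.
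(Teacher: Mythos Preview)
Your proof is correct and follows essentially the same approach as the paper's. The paper's (very terse) argument restricts attention to the non-delayed positions and observes that two distinct openings with the same syndrome must differ in at least $d-k_2>2k_1$ of those positions, while Alice's actual inputs there are fixed and each opening can deviate from them in at most $k_1$ places. Your version keeps all $n$ positions and instead routes the triangle inequality through the intermediate ``actual'' sequences $\hat z_i^n$, obtaining the upper bound $2k_1+k_2$; comparing with the lower bound $d\geq k_1+2k_2+1$ and invoking $k_2>k_1$ is exactly the same arithmetic rearranged. The extra care you take in justifying why the two continuations agree outside $D$ (non-signaling plus single-use boxes) is a point the paper leaves implicit.
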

\begin{proof}
Any two input strings $s^n$ and $\tilde s^n$ with $s_{0}=syn(s^n)=syn(\tilde s^n)$ have distance at least $d$. If we ignore all the positions where Alice did not input anything to the box, $s^n$ and $\tilde s^n$ still have distance at least $d-k_{2}>2k_{1}$.   
\end{proof}

\section{Tightness of our Results}

In this section we show that every non-signaling box with binary outputs that cannot be securely implemented from shared randomness allows to realize bit commitment with one of the above protocols. 

\begin{lemma}
\label{lemma:reduced-box}
Let $W:\mathcal{U}\times\mathcal{V}\rightarrow\{0,1\}^2$ be a non-signaling box with $|\mathcal{U}|\geq2$. If there exists $(x_0,u_0)$ such that either $W^A(x_0|u_0)=0$ or $\hat W_{x_0u_0}=\hat W_{x_1u_1}$ for some $(x_1,u_1)\neq (x_0,u_0)$ with  $W^A(x_0|u_0)\leq W^A(x_1|u_1)$, then bit commitment can be implemented from $W$ if and only if bit commitment can be implemented from the reduced box $\tilde W$ that is obtained by removing input $u_0$ from $W$. Furthermore, $W$ is local if and only if $\tilde W$ is local.  
\end{lemma}
\begin{proof}
We prove the statement by showing that Alice having access to $\tilde W$ can simulate the behavior of $W$ on input $u_0$ using local randomness: 
We first consider the case where $\hat W_{x_0u_0}=\hat W_{x_1u_1}$ with $u_1\neq u_0$ and $W^A(x_1|u_1)\neq 0$. We define $p:=W^A(x_0|u_0)/W^A(x_1|u_1)$. Then it holds that $$W(x_0y|u_0v)=pW(x_1y|u_1v)$$ for all $y\in \{0,1\},v\in \mathcal{V}$. It follows from the non-signaling conditions that $$W((1-x_0)y|u_0v)=(1-p)W(x_1y|u_1v)+W((1-x_0)y|u_1v)$$ for all $y\in \{0,1\},v\in \mathcal{V}$. 
We assume $x_0=x_1=0$. Then we can simulate $W$ using $\tilde W$ in the following way: Alice gives input $u_1$ to $\tilde W$ and gets output $x$. If $x=1$, then Alice outputs 1. If $x=0$, then Alice outputs 0 with probability $p$ and 1 with probability $1-p$. If $W^A(x_0|u_0)=0$ or $\hat W_{0u_0}=\hat W_{1u_0}$, then Alice on input $u_0$ outputs 0 with probability $W^A(0|u_0)$ and 1 with probability $W^A(1|u_0)$. 
\end{proof}

\begin{theorem}
\label{completeness1}
A non-signaling box $W: \{0,1\}^2 \rightarrow \{0,1\}^2$ that fulfills neither Condition \ref{property1} nor Condition \ref{non-trivial3} does not allow for information-theoretically secure bit commitment (with noiseless communication only) and is local.
\end{theorem}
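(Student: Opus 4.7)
The plan is to show structurally that any $2\times 2$ non-signaling box with binary outputs failing both conditions admits a local decomposition of the form required by Theorem~\ref{impossibility}, from which both conclusions (locality and impossibility of bit commitment) follow. I would begin by applying Lemma~\ref{lemma:reduced-box}: whenever $W^A(x_0|u_0)=0$ or two rows $\hat W_{x_0u_0}=\hat W_{x_1u_1}$ coincide, one of Alice's inputs is removed while preserving both locality and the reducibility of bit commitment. Since $|\mathcal{U}|=2$, such a reduction lands in $|\mathcal{U}|=1$, giving a box that is trivially local; there, either the two rows $\hat W_{0u_0},\hat W_{1u_0}$ coincide (in which case the box is independent and clearly admits no bit commitment, being simulable from nothing), or they are distinct and the unique Alice-input itself plays the role of $u_0$ in Theorem~\ref{impossibility} with deterministic strategies reading off her output. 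Hence we may restrict attention to the non-degenerate case in which the four rows $\hat W_{00},\hat W_{10},\hat W_{01},\hat W_{11}$ are pairwise distinct and every $W^A(x|u)$ is strictly positive.

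I would then extract the structural consequences of the failures. Failure of Condition~\ref{property1}, applied at each Alice-input $u\in\{0,1\}$, yields one of two alternatives: either (i) there exists $v$ with $\Hshannon(X_v|Y_v)=0$, which forces $W$ to be perfectly correlated or perfectly anti-correlated at $(u,v)$ (the constant-output case is excluded by non-degeneracy of $W^A(\cdot|u)$); or (ii) both rows $\hat W_{0u}$ and $\hat W_{1u}$ are non-extreme in $\conv(\hat W)$. Failure of Condition~\ref{non-trivial3} requires at least two of the four rows to be non-extreme. Because any finite set of distinct vectors contains at least one extreme point of its convex hull, the subcase where (ii) holds for both values of $u$ (all four rows non-extreme) is ruled out, leaving only three subcases classified by whether (i) holds at $u=0$, at $u=1$, or at both.

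In every surviving subcase I would exhibit an explicit decomposition $W=pV_A^0V_B^0+(1-p)V_A^1V_B^1$ fitting Theorem~\ref{impossibility}, with Alice's strategies giving opposite deterministic outputs on some input $u_0$ and Bob's on some $v_0$. When alternative (i) holds at $u$, the perfect (anti-)correlation at the input pair $(u,v)$ supplies a shared bit $\lambda$ recoverable on both sides, fixing $V_A^\lambda(\cdot|u)$ and $V_B^\lambda(\cdot|v)$ as deterministic opposite bits. The remaining values of the strategies are then pinned down by decomposing each non-extreme row along the segment between the two extreme rows (alternative (ii) at the other input) and transporting the decomposition across via non-signaling. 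Once $W$ is in the required form, Theorem~\ref{impossibility} delivers both conclusions.

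The main obstacle is the consistency check in the mixed subcases where alternative (i) holds at one Alice-input and (ii) at the other: one has to show that the shared bit determined by the perfect (anti-)correlation at one input pair coincides with the shared bit governing the segment decomposition of the non-extreme rows at the other input, so that a single weight $p$ and only two deterministic strategy pairs suffice. Equivalently, the local structure furnished by (i) and the non-extreme structure furnished by (ii) must be tied together globally by non-signaling rather than being independent --- otherwise one would need richer shared randomness and would fall outside the narrow template of Theorem~\ref{impossibility}.
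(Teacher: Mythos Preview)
Your plan has the right skeleton---reduce degenerate configurations via Lemma~\ref{lemma:reduced-box}, then analyze the extreme-point structure of $\conv(\hat W)$---but two steps do not go through as written.

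\textbf{The degenerate case.} You claim that any single-input reduced box with distinct rows $\hat W_{0u_1}\neq\hat W_{1u_1}$ automatically falls under Theorem~\ref{impossibility} because Alice's deterministic strategies ``read off her output''. This covers only Alice's side: Theorem~\ref{impossibility} also requires a Bob-input $v_0$ at which $V_B^0,V_B^1$ are deterministic and opposite, which is exactly the requirement that $(u_1,v_0)$ be perfectly correlated or anti-correlated. A single-input box with all four probabilities $W(xy|u_1v)$ strictly positive for every $v$ has distinct rows yet escapes Theorem~\ref{impossibility}; in fact such a box satisfies Condition~\ref{property1} and admits Protocol~I. What rescues you is the hypothesis: in the degenerate case the set $\{\hat W_z\}$ has at most three distinct points, hence at least one of $\hat W_{0u_1},\hat W_{1u_1}$ is extreme in $\conv(\hat W)$, so part~(1) of Condition~\ref{property1} holds at $u_a=u_1$, and failure of Condition~\ref{property1} then forces part~(2) to fail---supplying the perfectly (anti-)correlated pair you need. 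The paper makes precisely this move (phrased as ``otherwise bit commitment can be reduced to this box using Protocol~I'').

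\textbf{The non-degenerate case and your ``main obstacle''.} The paper does not leave this open; it organizes the analysis by $|\extr(\conv(\hat W))|$ rather than by your (i)/(ii) dichotomy, and this avoids the ambiguity in your (i)--(i) subcase. If $|\extr(\conv(\hat W))|\geq 3$, Condition~\ref{non-trivial3} holds, a contradiction. If $|\extr(\conv(\hat W))|=2$ with both extreme rows at the same Alice-input $u$, the paper writes down an explicit decomposition $W=W^A(0|u)\,V_A^0V_B^0+W^A(1|u)\,V_A^1V_B^1$ (the non-signaling constraint $a_0+b_0=a_1+b_1=1$ is exactly the ``consistency check'' you worry about, and it is verified by a direct computation); then either some $(u,v)$ is perfectly (anti-)correlated and Theorem~\ref{impossibility} applies, or Condition~\ref{property1} holds at $u_a=u$, contradiction. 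If the two extreme rows sit at \emph{different} Alice-inputs, the paper shows that having perfectly (anti-)correlated pairs at both inputs forces two rows of $\hat W$ to coincide---contradicting non-degeneracy---while the absence of such a pair at some input again gives Condition~\ref{property1}. Your (i)--(i) case straddles both of these sub-cases and cannot be handled by a single decomposition; you need this finer split on which pair of rows is extreme.
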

\begin{proof}
We first consider the case where there exists $(x_0,u_0)$ such that $W^A(x_0|u_0)=0$ or $\hat W_{x_0u_0}=\hat W_{x_1u_1}$ for some $(x_1,u_1)\neq (x_0,u_0)$. We assume $W^A(x_0|u_0)\leq W^A(x_1|u_1)$ and examine the box $\tilde W$ that is obtained by removing input $u_0$. $\tilde W$ is obviously local. If $\hat W(0|u_1)=\hat W(1|u_1)$, the box is independent and doesn't allow for bit commitment. If there is a perfectly correlated or anti-correlated input pair, the box doesn't allow for bit commitment according to Theorem \ref{impossibility}. Otherwise bit commitment can be reduced to this box using Protocol I. From Lemma \ref{lemma:reduced-box} it follows that we can implement bit commitment from $W$ if and only if bit commitment can be implemented from $\tilde W$.  Thus, the claim follows for all boxes with $W^A(x_0|u_0)=0$ or $\hat W_{x_0u_0}=\hat W_{x_1u_1}$ for some $(x_1,u_1)\neq (x_0,u_0)$. In the following we assume $W^A(x_0|u_0)\neq0$ for all $x_0,u_0\in \{0,1\}$ and $\hat W_{z}\neq\hat W_{z'}$ for all $z,z' \in \{0,1\}^2$ with $z\neq z'$. \\
(1) $|\extr(\conv(\hat W))|\geq 3$: Then the box fulfills Condition \ref{non-trivial3} and we can securely implement bit commitment using Protocol II.\\ 
(2) $|\extr(\conv(\hat W))|=2$: We first consider the case $\hat W_{1u},\hat W_{0u} \in D$. Without loss of generality, we can assume $u=0$. Then there exist $0<\lambda_0,\mu_0 < 1$ such that
$$\hat W_{01}=\lambda_0\hat W_{00}+(1-\lambda_0) \hat W_{10},$$
$$\hat W_{11}=\mu_0\hat W_{00}+(1-\mu_0) \hat W_{10}.$$
We define $\lambda_1:=1-\lambda_0$ and $\mu_1:=1-\mu_0$. Then it follows from the non-signaling conditions that for all $(y,v)\in \{0,1\}\times\mathcal{V}$
$$\frac{W(0y|1v)}{W^A(0|1)}=\frac{\lambda_0W(0y|0v)}{W^A(0|0)}+\frac{\lambda_1W(1y|0v)}{W^A(1|0)} ,$$
$$\frac{W(1y|1v)}{W^A(1|1)}=\frac{\mu_0W(0y|0v)}{W^A(0|0)}+\frac{\mu_1W(1y|0v)}{W^A(1|0)}.$$
We define
$$a_x:=\frac{\lambda_xW^A(0|1)}{W^A(x|0)},~x\in\{0,1\},$$
$$b_x:=\frac{\mu_x  W^A(1|1)  }{W^A(x|0)},~x\in\{0,1\}.$$

Then it follows from the non-signaling conditions that for all $(y,v)\in \{0,1\}\times\mathcal{V}$ it holds that 
$W(0y|0v)+W(1y|0v)$ is equal to 
$$(a_0+b_0)W(0y|0v)+(a_1+b_1)W(1y|0v)$$
As we have have excluded the case $\hat W_{10}=\hat W_{00} $, it follows that 
$a_0+b_0=a_1+b_1=1$.
Then the box is local as follows from $W(xy|uv)=W^A(0|0) V_{A}^0(x|u) V_{B}^0(y|v) + W^A(1|0)V_{A}^1(x|u) V_{B}^1(y|v)$ with
\begin{center}
  \begin{tabular}{ |c|c|c| }
  	\hline
    $(x,u)$&$V_{A}^0(x|u)$&$V_{A}^1(x|u)$ \\\hline\hline
    $(0,0)$&$1$&$0$\\\hline
    $(0,1)$& $a_0$ & $a_1$ \\\hline
    $(1,0)$&$0$&$1$\\\hline
    $(1,1)$&$b_0 $&$b_1$\\\hline
\end{tabular}
\\
\vspace{0.2cm}
\end{center}
and 
$$V_B^0(y|v):=W(0y|0v)/W^A(0|0),$$
$$V_B^1(y|v):=W(1y|0v)/W^A(1|0)$$ for all $y,v \in \{0,1\}$. If one of the inputs $(0,0)$ or $(0,1)$ is perfectly correlated or anti-correlated, then we cannot reduce bit commitment to this box (Theorem~\ref{impossibility}). Otherwise we can securely implement bit commitment from this box using Protocol I. 
\\ Next, we consider the case $\hat W_{x0},\hat W_{x'1} \in D,~x,x'\in\{0,1\}$. We assume $x=x'=0$. Then it holds that
$$\hat W_{10}=\lambda_{00}\hat W_{00}+\lambda_{01}\hat W_{01},$$
$$\hat W_{11}=\mu_{00}\hat W_{00}+\mu_{10}\hat W_{10}.$$
If there is $u\in \{0,1\}$ such that for all $v\in\{0,1\}$ the box is neither perfectly correlated nor perfectly anti-correlated for input $(u,v)$, then the box fulfills Condition \ref{property1}. Otherwise, there must be $v_{0},v_{1}\in\{0,1\}$ such that the box is perfectly correlated or anti-correlated for both $(0,v_{0})$ and $(1,v_{1})$. Then it follows that $\lambda_{00}=0$ and $\mu_{10}=0$, which is a contradiction to our assumptions.\\
The case $|\extr(\conv(\hat W))|\leq 1$ we have already excluded.

\end{proof}



In order to prove that we can reduce bit commitment to any box with binary outputs (and general input alphabets  $\mathcal{U}\text{ and } \mathcal{V}$) that cannot be securely implemented from shared randomness we need to give an alternative condition for the security of Protocol II. 
\begin{property}
\label{non-trivial4}
There exist $u_{0},u_{1} \in \mathcal{U},~u_{0}\neq u_{1}$ and $x_{0}, x_{1}\in\{0,1\}$ such that the following two conditions hold:\\
(1) $W_{x_0u_0},W_{x_1u_1}$ are extreme points of $\conv(\hat W)$, i.e., for all $c\in\{(x_0,u_0),(x_1,u_1)\}$ it holds that for all probability distributions $P$ s.t. $P(c)=0$
\begin{align*}
\Big \Vert \hat W_{c}-\sum_{z}P(z) \hat W_{z} \Big \Vert_{1} \geq \delta.
\end{align*}
(2) Let $c,c'\in\{(1-x_{0},u_0),(1-x_1,u_1)\}$ with $c\neq c'$. Then for all probability distributions $P$ such that $P(c')>0$ and $P(c)=0$  it holds that 
$$\Big \Vert \hat W_{c}-\sum_{z}P(z) \hat W_{z} \Big \Vert_{1} \geq \delta.$$
\end{property}
To prove Protocol II secure for all boxes that fulfill Condition \ref{non-trivial4}, we replace Lemma~\ref{statLemma4} with the following lemma. 
We assume that $(x_{0},u_{0})=(0,0)$ and $(x_{1},u_{1})=(0,1)$.  
\begin{lemma}
\label{statLemma5}
If $d_{H}(z^n,\tilde z^n)\geq k_{1}$, then the probability that Bob accepts $\tilde z^n$ is negligible in $n$.
\end{lemma}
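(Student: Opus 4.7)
The plan is to adapt the two-case analysis of the proof of Lemma~\ref{statLemma4} to Condition~\ref{non-trivial4}. Under the assumptions $(x_0,u_0)=(0,0)$ and $(x_1,u_1)=(0,1)$, only $\hat W_{00}$ and $\hat W_{01}$ are guaranteed to be extreme points of $\conv(\hat W)$ by Condition~\ref{non-trivial4}(1); the points $\hat W_{10}$ and $\hat W_{11}$ may be non-extreme, so Lemma~\ref{statLemma} cannot be invoked on them directly. The new ingredient is Condition~\ref{non-trivial4}(2), which says that any convex combination approximating $\hat W_{10}$ that puts positive weight on $\hat W_{11}$ (and symmetrically) is $\delta$-far from the target. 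Case~(1) of Lemma~\ref{statLemma4} (some input $u'$ underused by Alice) is handled exactly as there, applying Lemma~\ref{statLemma} with $a=(0,u')$ (extreme by Condition~\ref{non-trivial4}(1)); Chernoff forces $d_H(z^{I_{0u'}}, \tilde z^{I_{0u'}}) = \Omega(k_1)$, and the lemma yields negligible acceptance.

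For Case~(2) (inputs balanced, $z^n\in\mathcal T^n_{W^A,\epsilon}(u^n)$ with overwhelming probability), pigeonhole still yields some $(x',u')\in\{0,1\}^2$ with $d_H(z^{I_{x'u'}}, \tilde z^{I_{x'u'}}) \geq k_1/4$. If $x'=0$, Condition~\ref{non-trivial4}(1) lets us apply Lemma~\ref{statLemma} directly at $a=(x',u')$, and we are done. Otherwise, without loss of generality $(x',u')=(1,0)$; set $D:=\{k\in I_{10}:z_k\neq(1,0)\}$, so $|D|\geq k_1/4$. I would split on the empirical distribution $P$ of $\{z_k:k\in D\}$. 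If $P((1,1))>0$ (that is, at least one $k\in D$ has $z_k=(1,1)$), then Condition~\ref{non-trivial4}(2) with $c=(1,0)$ and $c'=(1,1)$ supplies $\|\hat W_{10} - \tfrac{1}{|D|}\sum_{k\in D}\hat W_{z_k}\|_1 \geq \delta$, which is precisely the inner inequality used inside the proof of Lemma~\ref{statLemma} at \eqref{statLemmaInequ}; rerunning the Chernoff argument from there with $a=(1,0)$ finishes this sub-case. Otherwise every $k\in D$ has $z_k\in\{(0,0),(0,1)\}$, so by pigeonhole at least $k_1/8$ positions (without loss of generality) satisfy $z_k=(0,0)$ and $\tilde z_k=(1,0)$. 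The outflow row of $(0,0)$ in the change matrix is therefore at least $k_1/8$; since both $z^n$ and $\tilde z^n$ lie in $\mathcal T^n_{Q,\epsilon}$, the frequencies $N((0,0)|z^n)$ and $N((0,0)|\tilde z^n)$ differ by at most $2\epsilon n = \tfrac{1}{2}\lambda\delta k_1 \leq k_1/32$, so the inflow column of $(0,0)$ also carries $\Omega(k_1)$ off-diagonal mass, i.e., $d_H(z^{I_{00}},\tilde z^{I_{00}})\geq 3k_1/32$. Since $(0,0)$ is extreme by Condition~\ref{non-trivial4}(1), Lemma~\ref{statLemma} applies at $a=(0,0)$ and concludes the argument. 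The symmetric case $(x',u')=(1,1)$ is identical, swapping the roles of $(1,0)$ and $(1,1)$ throughout.

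The main obstacle I anticipate is clean bookkeeping of the constants across the non-extreme-to-extreme reduction: one must verify that the typicality slack $2\epsilon n$ is strictly smaller than the $\Omega(k_1)$ disagreement exhibited (which is exactly why $\epsilon$ was defined proportional to $k_1/n$ with a small coefficient), and that the fraction $|I_{00}|/n$ entering Lemma~\ref{statLemma} is bounded below by a positive constant, a consequence of $\tilde z^n\in\mathcal T^n_{Q,\epsilon}$ together with $W^A(x|u)\geq p_0>0$.
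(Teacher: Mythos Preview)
Your overall plan matches the paper's: Case~(1) is identical, and in Case~(2) the use of Condition~\ref{non-trivial4}(2) when $D$ contains a cross-over to $(1,1)$ (your sub-case~A) is exactly how the paper invokes this condition. The paper, however, organises Case~(2) as a flat split at threshold $k_1/8$: it checks the four events $d_H(z^{I_{00}},\tilde z^{I_{00}})\geq k_1/8$, $d_H(z^{I_{01}},\tilde z^{I_{01}})\geq k_1/8$, $|\{i\in I_{10}:z_i=(1,1)\}|\geq k_1/8$, $|\{i\in I_{11}:z_i=(1,0)\}|\geq k_1/8$, handles each via Lemma~\ref{statLemma} or its proof, and when all four fail concludes directly that $\tilde z^n\notin\mathcal T^n_{Q,\epsilon}$. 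It never reduces a non-extreme index back to an extreme one.

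Your sub-case~B (the inflow/outflow reduction) has a real gap. You write ``since both $z^n$ and $\tilde z^n$ lie in $\mathcal T^n_{Q,\epsilon}$'' to get $|N((0,0)\mid z^n)-N((0,0)\mid\tilde z^n)|\leq 2\epsilon n$, but Alice is the dishonest party here and only $\tilde z^n\in\mathcal T^n_{Q,\epsilon}$ is enforced by Bob's check. For $z^n$ you only have $z^n\in\mathcal T^n_{W^A,\epsilon}(u^n)$ together with $|N(u\mid u^n)-n/2|\leq k_1/8$; this gives $|N((0,0)\mid z^n)-nQ(0,0)|\leq\epsilon n+W^A(0|0)\cdot k_1/8$, so the count difference is bounded by $2\epsilon n+k_1/8$, not $2\epsilon n$. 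With outflow $\geq k_1/8$ your inflow bound becomes $k_1/8-(2\epsilon n+k_1/8)=-2\epsilon n$, which is vacuous. The extra $k_1/8$ slack---coming precisely from Alice's freedom in her true inputs---is of the same order as the outflow you exhibit, so the transfer to the extreme index $(0,0)$ does not close with these constants. The paper sidesteps this entirely: by requiring all four $k_1/8$-thresholds to fail simultaneously, it can argue a $Q$-typicality violation for $\tilde z^n$ without ever needing a typicality statement about $z^n$.
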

\begin{proof}
For all $w^n \in \mathcal{T}_{Q,\epsilon}^n$ it holds that $|N(xu|w^n)-\frac{n}{2}W^A(x|u)|\leq \epsilon n\leq \frac{1}{32}k_1$. We distinguish the following two cases:\\
(1) If there exists $u' \in \{0,1\}$ such that $N(u'|u^n)\leq n/2-k_1/8$, then the statement follows from the proof of Lemma~\ref{statLemma4}.
\\
(2) If $|n/2-N(u|u^n)|\leq k_1/8 \text{ for all }u\in \{0,1\}$, then it follows from Lemma~\ref{typical2} that with overwhelming probability   $z^n\in \mathcal{T}_{W^A,\epsilon}^n(u^n)$. Assume  $z^n\in \mathcal{T}_{W^A,\epsilon}^n(u^n)$. If $d_{H}(z^{I_{00}},\tilde z^{I_{00}})\geq \frac{1}{8}k_1$ or $d_{H}(z^{I_{01}},\tilde z^{I_{01}}) \geq \frac{1}{8}k_1$, then the claim follows from Lemma~\ref{statLemma} and Condition \ref{non-trivial4}. If $|\{i \in I_{10}:z_{i}=11\}|\geq \frac{1}{8}k_1$, then the claim follows from Condition \ref{non-trivial4} and the proof of Lemma~\ref{statLemma} as follows: Let $D:=\{k\in I_{10}:z_k\neq \tilde z_k\}$. We use Condition \ref{non-trivial4} and replace (\ref{statLemmaInequ}) with 
\begin{align*}
\Big \Vert\frac{1}{|I_{10}|}\sum_{k \in I_{10}}W_{z_{k}}-W_{10} \Big \Vert_{1}
&=\frac{|D|}{|I_{10}|} \Big \Vert W_{10}-\frac{1}{|D|}\sum_{k \in D}W_{z_{k}} \Big \Vert_{1}\\
&=\frac{|D|}{|I_{10}|}\delta \geq \frac{1}{8}k_1\delta/n.
\end{align*}
We assume $\tilde z^n \in \mathcal{T}_{Q,\epsilon}^n$. Then it follows as in the proof of Lemma~\ref{statLemma} that $W_{z^n}^n(\mathcal{T}^n_{W,\epsilon}(\tilde z^n))$ is negligible.
The same holds if $|\{i \in I_{11}:z_{i}=10\}|\geq \frac{1}{8}k_1$. In all other cases it follows that $\tilde z^n \notin \mathcal{T}_{Q,\epsilon}^n$.
\end{proof}
\noindent

\begin{theorem}
\label{completeness2}
Bit Commitment can be reduced to any non-signaling box with binary outputs that cannot be securely implemented from shared randomness.
\end{theorem}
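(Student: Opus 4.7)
The strategy is to extend the dichotomy of Theorem~\ref{completeness1} from $2\times 2$ input alphabets to arbitrary $\mathcal{U},\mathcal{V}$, using Lemma~\ref{lemma:reduced-box} to peel off redundant inputs and using the strengthened Condition~\ref{non-trivial4} (together with Lemma~\ref{statLemma5}) to broaden the applicability of Protocol~II. First, I would iteratively apply Lemma~\ref{lemma:reduced-box} as long as there exists $(x_0,u_0)$ with $W^A(x_0|u_0)=0$ or $\hat W_{x_0u_0}=\hat W_{x_1u_1}$ for some $(x_1,u_1)\neq(x_0,u_0)$; by the lemma each such step preserves both locality and the existence/non-existence of a bit-commitment reduction (the removed input is locally emulated by Alice on the reduced box). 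After finitely many reductions we obtain a box $\tilde W$ in which all $W^A(x|u)>0$ and all rows $\hat W_{xu}$ are pairwise distinct. If $|\mathcal{U}|=1$ after reduction, then $\tilde W$ depends only on Bob's input and is easily simulated from shared randomness, so the theorem holds vacuously; hence I may assume $|\mathcal{U}|\geq 2$.

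With $\tilde W$ reduced and $|\mathcal{U}|\geq 2$, I would split on the number of extreme points of $\conv(\hat W)$. If $|\extr(\conv(\hat W))|\leq 2$, I mirror the argument of Theorem~\ref{completeness1}: in the subcase where both extreme rows share a common Alice-input $u^{*}$, every other row is a convex combination of the two and the non-signaling conditions on Bob's side force a decomposition $W=pV_A^0V_B^0+(1-p)V_A^1V_B^1$ of the form examined in Theorem~\ref{impossibility}, whence either some input pair is perfectly (anti-)correlated so $\tilde W$ is local and unusable for bit commitment, or Condition~\ref{property1} holds at $a=(x_0,u^{*})$ and Protocol~I applies; in the subcase where the two extreme rows sit over different Alice-inputs, a symmetric analysis again yields either local-and-unusable or Condition~\ref{property1}. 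If $|\extr(\conv(\hat W))|\geq 3$, then since any single Alice-input contributes at most two rows, at least two extreme points necessarily come from distinct Alice-inputs, giving part~(1) of Condition~\ref{non-trivial4} immediately; I then check part~(2) on the complement rows $\hat W_{(1-x_i)u_i}$, invoking Protocol~II together with Lemma~\ref{statLemma5} once both parts are verified.

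The main obstacle will be the selection argument in the $\geq 3$ extreme-point case: ensuring that the pair $\hat W_{x_0u_0},\hat W_{x_1u_1}$ can be chosen so that the complement rows also satisfy part~(2) of Condition~\ref{non-trivial4}, which requires that neither complement is a convex combination of other rows giving positive weight to the other complement. A naive choice can fail because the complements themselves need not be extreme. My plan is to exploit the excess of extreme points: if a candidate pair violates part~(2), the violating convex combination exhibits a third extreme row which I can swap into the pair, and I would argue by a finite-exchange or pigeonhole step over the (finite) extreme set that after at most a few swaps a valid pair is reached. Formalising this combinatorial selection is, I expect, the technical crux of the proof; everything else reduces to the already-available machinery, namely Lemma~\ref{lemma:reduced-box} for the reduction, Theorem~\ref{completeness1} as the base case for the structural analysis when $|\extr|\leq 2$, and direct invocation of Protocols~I and~II.
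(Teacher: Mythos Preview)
Your reduction step via Lemma~\ref{lemma:reduced-box} and the case split on $|\extr(\conv(\hat W))|$ match the paper exactly, and your treatment of $|\extr|\leq 2$ is essentially the same as in Theorem~\ref{completeness1}. The divergence is in the $|\extr|\geq 3$ case, where you head straight for Condition~\ref{non-trivial4} and propose an unspecified ``finite-exchange'' argument to secure part~(2). The paper does \emph{not} do this. Instead it layers three checks: (i) if some Alice-input $u$ has both rows $\hat W_{0u},\hat W_{1u}$ extreme, Condition~\ref{non-trivial3} holds outright and Protocol~II applies; (ii) otherwise, for each extreme row $\hat W_{x u}$ one asks whether Protocol~I already works at $u_a=u$, i.e., whether Condition~\ref{property1}(2) holds --- if it does for some such $u$, Protocol~I finishes the job; (iii) only when (i) and (ii) both fail does the paper turn to Condition~\ref{non-trivial4}, and crucially the failure of (ii) means every input $u$ indexing an extreme row has a Bob-input $v$ with perfect (anti-)correlation. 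The paper then exploits this zero structure concretely: starting from an extreme $\hat W_{x_0u_0}$ with $(u_0,v_0)$ perfectly correlated, it uses non-signaling to trace zeros and strictly positive entries (e.g., $W(10|u_0v_0)=0$ forces $W(00|u_1v_0)=0$ and $W(10|u_1v_0)>0$, and so on) to exhibit a pair $(u_0,u_1)$ for which part~(2) of Condition~\ref{non-trivial4} is witnessed by explicit nonzero $\ell_1$-gaps, not by an abstract swap.

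This matters because your swap heuristic is where the real difficulty lies, and it is not clear it terminates or even works: the complements $\hat W_{(1-x_i)u_i}$ can both be non-extreme, and part~(2) forbids \emph{any} representation of one complement that puts positive mass on the other, not just the canonical extreme-point representation. Without extra structure there is no obvious monotone quantity your exchanges decrease. The paper sidesteps this entirely by first exhausting Protocol~I; the residual case then carries enough zero constraints to make Condition~\ref{non-trivial4} verifiable by hand. I would recommend you insert the Protocol~I step before attempting Condition~\ref{non-trivial4}; otherwise you will have to supply a genuine proof of your exchange argument, which the paper neither needs nor provides.
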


\begin{proof}: 
If $|\mathcal{U}|\leq 2$, then the statement follows from the proof of Theorem \ref{completeness1}. Otherwise, we first eliminate the cases where there exists $(x_0,u_0)$ such that $W^A(x_0|u_0)=0$ or $\hat W_{x_0u_0}=\hat W_{x_1u_1}$ for some $(x_1,u_1)\neq (x_0,u_0)$ by using Lemma \ref{lemma:reduced-box} to reduce the box. Then we consider $D:=\extr(\conv(\hat W))$: In the case $|D|=2$ the statement is proven in the same way as in the proof of Theorem \ref{completeness1}. The case $|D|\geq3$ is a little bit more involved: If there is $\hat W_{1u},\hat W_{0u} \in D$, then Condition $\ref{non-trivial3}$ is fulfilled and we can implement bit commitment using Protocol II. Otherwise, we can either implement bit commitment using Protocol I or for every input $u$ corresponding to an element of $D$ there is an input $v$ for Bob such that the box is perfectly correlated or anti-correlated. Let $\hat W_{x_0u_0}\in D$. Without loss of generality we assume that $W$ is perfectly correlated for input $(u_0,v_0)$. Then there exist $\lambda_z$ with $\sum_{z:\hat W_z\in D}\lambda_z=1$ such that
$$\hat W_{(1-x_0)u_0}=\sum_{z:\hat W_z\in D}\lambda_zW_z.$$
There exists $(x_1,u_1)$ with $u_1\neq u_0$ such that $\lambda_{x_1u_1}>0$. We assume $x_0=x_1=0$. We have $W(10|u_0v_0)=0$. This implies $W(00|u_1v_0)=0$. From the non-signaling conditions follows that $W(10|u_1v_0)=W(00|u_0v_0)>0$. There exists $v_1\in \mathcal{V}$ such that $(u_1,v_1)$ is perfectly correlated or anti-correlated. We assume without loss of generality that $(u_1,v_1)$ is perfectly correlated. This implies $W(00|u_1v_1)>0$ and $W(10|u_1v_1)=0$. From $\lambda_{x_1u_1}>0$ follows that $W(10|u_0v_1)>0$. So we have $\hat W_{0u_0},\hat W_{0u_1}\in D$, $W(10|u_0v_0)=W(10|u_1v_1)=0$, $W(10|u_1v_0)>0$ and $W(10|u_0v_1)>0$. Thus, Condition \ref{non-trivial4} is fulfilled.

\end{proof}

\section{Concluding Remarks}

We have shown that any bipartite non-signaling system with binary outputs can either be securely realized from shared randomness or allows for bit commitment. 

An obvious open question is whether a similar result holds for arbitrary output alphabets. Furthermore, it would be interesting to know whether oblivious transfer can be implemented from the same set of non-signaling systems. 

\section*{Acknowledgments}
We thank Dejan Dukaric, Esther H\"anggi and Thomas Holenstein for helpful discussions, and the referees for their useful comments.

\bibliography{all}
\bibliographystyle{IEEEtran}

\end{document}